\newcommand{\E}{\mathcal{E}}
\newcommand{\T}{\mathcal{T}}
\newcommand{\Td}{\mathcal{T}^{ideal}_{\delta}}
\newcommand{\C}{\mathcal{C}}
\newcommand{\A}{\mathcal{A}}
\newcommand{\Gn}[3]{\mathcal{G}^{#1}_{#2, #3}}
\newcommand{\Gnn}[2]{\mathcal{G}^{#1}_{#2}}
\newcommand{\qEx}{\mathsf{qEx}}
\newcommand{\qSel}{\mathsf{qSel}}
\newcommand{\Sin}{S_{in}}
\newcommand{\Sout}{S_{out}}
\newcommand{\negl}{negl}
\newcommand{\nonnegl}{non\text{-}\negl}
\newcommand{\Hil}{\mathcal{H}}
\newcommand{\Hilin}{\mathcal{H}_{in}}
\newcommand{\Hilout}{\mathcal{H}_{out}}
\newcommand{\HilD}{\mathcal{H}^D}
\newcommand{\Hild}{\mathcal{H}^d}
\newcommand{\Hildperp}{\mathcal{H}^{d^{\perp}}}
\newcommand{\Hildperpo}{\mathcal{H}^{d^{\perp}}_{out}}
\newcommand{\Hildin}{\Hil^{d}_{in}}
\newcommand{\HildIn}{\Hil^{d_{in}}}
\newcommand{\Hildout}{\Hil^{d}_{out}}
\newcommand{\HildOut}{\Hil^{d_{out}}}
\newcommand{\kpo}{\ket{\psi^{out}}}
\newcommand{\po}{\psi^{out}}
\newcommand{\U}{\mathrm{U}}
\newcommand{\UqPUF}{\mathrm{UqPUF}}
\newcommand{\qPUF}{\mathrm{qPUF}}
\newcommand{\UqPUFid}{\UqPUF_{\mathbf{id}}}
\newcommand{\qPUFGen}{\mathrm{QGen}}
\newcommand{\qPUFEval}{\mathrm{QEval}}
\newcommand{\id}{\mathbf{id}}
\newcommand{\qPUFidi}{\qPUF_{\id_i}}
\newcommand{\qPUFidj}{\qPUF_{\id_j}}
\newcommand{\qPUFid}{\qPUF_{\id}}
\newcommand{\rhoin}{\rho_{in}}
\newcommand{\rhoout}{\rho_{out}}
\newcommand{\sigmain}{\sigma_{in}}
\newcommand{\sigmaout}{\sigma_{out}}
\newcommand{\psiin}{\psi_{in}}
\newcommand{\psiout}{\psi_{out}}
\newcommand{\mbraket}[2]{\bra{#1}#2\rangle}
\newtheorem{theorem}{Theorem}
\newtheorem{definition}{Definition}
\newtheorem{lemma}{Lemma}
\newenvironment{proof}{\textit{Proof:}}{\hfill$\square$}
\newenvironment{proofsketch}{\textit{Proof (Sketch):}}{\hfill$\square$}
\newtheorem{requirement}{Requirement}
\newtheorem{game}{Game}
\begin{document}

\title{Quantum Physical Unclonable Functions: Possibilities and Impossibilities}

\author{Myrto Arapinis}
\affiliation{School of Informatics, University of Edinburgh, 10 Crichton Street, Edinburgh EH8 9AB, UK}
\author{Mahshid Delavar}
\affiliation{School of Informatics, University of Edinburgh, 10 Crichton Street, Edinburgh EH8 9AB, UK}
\author{Mina Doosti}
\affiliation{School of Informatics, University of Edinburgh, 10 Crichton Street, Edinburgh EH8 9AB, UK}
\email{m.doosti@sms.ed.ac.uk}
\thanks{This work has been presented at QCrypt 2019 (9th International Conference on Quantum Cryptography)}
\orcid{0000-0003-0920-335X}
\author{Elham Kashefi}
\affiliation{School of Informatics, University of Edinburgh, 10 Crichton Street, Edinburgh EH8 9AB, UK}
\affiliation{Departement Informatique et Reseaux, CNRS, Sorbonne Universit\'{e}, 4 Place Jussieu 75252 Paris CEDEX 05, France}
\maketitle

\begin{abstract}
    A Physical Unclonable Function (PUF) is a device with unique behaviour that is hard to clone hence providing a secure fingerprint. A variety of PUF structures and PUF-based applications have been explored theoretically as well as being implemented in practical settings. Recently, the inherent unclonability of quantum states has been exploited to derive the quantum analogue of PUF as well as new proposals for the implementation of PUF. We present the first comprehensive study of quantum Physical Unclonable Functions (qPUFs) with quantum cryptographic tools. We formally define qPUFs, encapsulating all requirements of classical PUFs as well as introducing a new testability feature inherent to the quantum setting only. We use a quantum game-based framework to define different levels of security for qPUFs: quantum exponential unforgeability, quantum existential unforgeability and quantum selective unforgeability. 
    We introduce a new quantum attack technique based on the universal quantum emulator algorithm of Marvin and Lloyd to prove no qPUF can provide quantum existential unforgeability. On the other hand, we prove that a large family of qPUFs (called unitary PUFs) can provide quantum selective unforgeability which is the desired level of security for most PUF-based applications.
\end{abstract}

\section{Introduction}\label{sec:intro}
Canetti and Fischlin's result on the impossibility of achieving secure cryptographic protocols without any setup assumptions \cite{canetti2001universally} has motivated a rich line of research investigating the advantages of making hardware assumptions in protocol design. The idea was first introduced by Katz in \cite{katz2007universally}, and attracted the attention of researchers and developers as it adopts physical assumptions and eliminates the need to trust a designated party or to rely on computational assumptions. Among different hardware assumptions, Physical Unclonable Functions (PUFs) have greatly impacted the field \cite{badrinarayanan2017unconditional}.

PUFs are hardware structures designed to utilize the random physical disorder which appear in any physical device during the manufacturing process. Because of the uncontrollable nature of these random disorders, building a clone of the device is considered impractical. The behaviour of a PUF is usually equivalent to a set of Challenge-Response Pairs (CRPs) which are extracted through physically querying the PUF and measuring its responses. The PUF's responses depend on its physical features and are assumed to be unpredictable, i.e. even the manufacturer of the PUF, with access to many CRPs, cannot predict the response to a new challenge \cite{ruhrmair2014pufs}. This property makes PUFs different from other hardware tokens in the sense that the manufacturer of a hardware token is completely aware of the behaviour of the token they have built \cite{brzuska2011physically}. 

So far, the cryptographic literature has mainly considered what we will call classical PUFs (or cPUFs) restricted to classical CRPs. Most cPUFs generate only a finite, albeit possibly exponential (in some desired security parameters), number of CRPs \cite{chang2017retrospective}. However, most of them remain vulnerable against different attacks like side-channel \cite{tebelmann2019side,chang2017retrospective} and machine-learning \cite{ganji2016strong,ruhrmair2014puf,ruhrmair2010modeling,khalafalla2019pufs}. Thus, considering the importance of cPUFs as a hardware security primitive in several real-world applications, on one hand, \cite{chang2017retrospective,herder2014physical,delavar2017puf,ameri2019provably,marchand2017implementation,liu2019xor,mukhopadhyay2016pufs}\footnote{Recently SAMSUNG announced that in their new processor Exynox 9820 they have integrated SRAM based PUF to store and manage personal data in perfect isolation. Also, a UK company, Quantum Base, has started to mass-produce its patented optical quantum PUFs.} and the recent advances in quantum technology, on the other hand, it is worth investigating whether quantum technologies could boost the security of cPUFs or if they, on the contrary, threaten their security. In the current work, we address the general and formal treatment of PUFs in a quantum world for the first time by defining quantum PUFs (qPUFs) as a quantum token that can be challenged with quantum states and respond with quantum states. We identify the requirements a qPUF needs to meet to provide the main security property required for most of the qPUF-based applications, that is \emph{unforgeability}\footnote{\emph{Unpredictability} and \emph{unclonability} are other equivalent terms for this notion used often in the literature.}. All prior similar works~\cite{vskoric2010quantum,vskoric2012quantum,nikolopoulos2017continuous,young2019quantum} (see related work paragraph below) considered the special case of qPUFs where the encoding of the responses is known to the manufacturer and in fact, the evaluation of the qPUF is public information. 
We provide a general and formal mathematical framework for the study of qPUFs as a new quantum primitive inspired from the theoretical literature of classical PUF while taking into account full capabilities of a quantum adversary. However, it is worth mentioning that designing and implementing concrete qPUFs satisfying our proposed level of security set up remains a challenging task that we are exploring separately as a follow up of this work.

\subsubsection{Our Contributions.} We first define qPUFs as quantum channels and formalize the standard requirements of robustness, uniqueness and collision-resistance for qPUFs guided by the classical counterparts to establish the requirements that qPUFs should satisfy to enable their usage as a cryptographic primitive. We then use the game-based framework to define three security notions for qPUFs: quantum exponential unforgeability, quantum existential unforgeability and quantum selective unforgeability capturing the strongest type of attack models where the adversary has access to the qPUF and can query it with his chosen quantum states. In this new model, we demonstrate how quantum learning techniques, such as the universal quantum emulator algorithm of \cite{marvian2016universal}, can lead to successful attacks. In doing so we establish several possibility and impossibility results. 

\begin{itemize}
\item \emph{No qPUF provides Quantum Exponential Unforgeability.} 
The presented attack is the correct analogue of the brute-force attack for classical PUFs.
\item \emph{No qPUF provides Quantum Existential Unforgeability.} We show how the universal quantum emulator algorithm (which is polynomial in the size of the qPUF's dimension) can break this security property of any qPUFs. 
\item \emph{Any qPUF provides Quantum Selective Unforgeability.} In other words, no QPT adversary can, on average, generate the response of a qPUF to random challenges. 
\end{itemize}

\subsubsection{Other Related Works} The concept of Physical Unclonable Functions was first introduced by Pappu \emph{et al.} \cite{pappu2002physical} in 2001, devising the first implementation of an Optical PUF. Optical PUFs were subsequently improved as to generating an independent number of CRPs~\cite{mesaritakis2018physical}. Several structures of Physical Unclonable Functions were further introduced including Arbiter PUFs \cite{gassend2002silicon}, Ring-Oscillator based PUFs \cite{suh2007physical,delavar2016ring} and SRAM PUFs \cite{guajardo2007fpga}. For a comprehensive overview of existing PUF structures, we refer the reader to \cite{maes2016physically,halak2018physically}. 

Recently, the concept of \say{quantum read-out of PUF (QR-PUF)} was introduced in \cite{vskoric2010quantum} to exploit the no-cloning feature of quantum states to potentially solve the spoofing problem in the remote device identification. The QR-PUF-based identification protocol has been implemented in \cite{goorden2014quantum}. In addition to the security analysis of this protocol against intercept-resend attack in \cite{vskoric2010quantum}, its security has also been analysed against other special types of attacks targeting extracting information from an unknown challenge state \cite{vskoric2013security,yao2016quantum}. In another work, \cite{nikolopoulos2017continuous}, the continuous variable encoding is exploited to implement another practical QR-PUF based identification protocol. The security of this protocol has also been analysed only against an attacker who aims to efficiently estimate or clone an unknown challenge quantum state \cite{nikolopoulos2018continuous,fladung2019intercept}. Moreover, some other applications of QR-PUFs have been introduced in \cite{vskoric2017authenticated} and \cite{uppu2019asymmetric}.

In another independent recent work, Gianfelici et al. have presented a common theoretical framework for both cPUFs and QR-PUFs \cite{gianfelici2020theoretical}. They quantitatively characterize the PUF properties, particularly robustness and unclonability. They also introduce a generic PUF-based identification scheme and parameterize its security based on the values obtained from the experimental implementation of PUF.

\section{Quantum Emulation Algorithm}\label{sec:qe}
In this section, we describe the Quantum Emulation (QE) algorithm presented in \cite{marvian2016universal} as a quantum process learning tool that can outperform the existing approaches based on quantum tomography \cite{d2001quantum}. The main idea behind quantum emulation comes from the question on the possibility of emulating the action of an unknown unitary transformation on an unknown input quantum state by having some of the input-output samples of the unitary. An emulator is not trying to completely recreate the transformation or simulate the same dynamics. Instead, it outputs the action of the transformation on a quantum state. The original algorithm was developed and proposed in the context of quantum process tomography, thus the analysis did not consider any adversarial behaviour. For our cryptanalysis purposes, we need to provide a new fidelity analysis for challenges not fully lying within the subspace of the learning phase. We further optimise the success probability of our attack by optimising the choice of the reference state.

\subsection{The Circuit and Description}
The circuit of the quantum emulation algorithm is depicted in~Figure~\ref{fig1-qe} also in \cite{marvian2016universal} and works as follows: Let $\U$ be a unitary transformation on a D-dimensional Hilbert space $\HilD$, $\Sin = \{\ket{\phi_i}; i = 1, ..., K\}$ be a sample of input states and $\Sout = \{\ket{\phi^{out}_i}; i = 1, ..., K\}$ the set of corresponding outputs, i.e $\ket{\phi^{out}_i} = \U\ket{\phi_i}$. Also, let $d$ be the dimension of the Hilbert space $\Hild$ spanned by $\Sin$ and $\ket{\psi}$, a challenge state. The goal of the algorithm is to find the output of $\U$ on $\ket{\psi}$, that is $\U\ket{\psi}$.

\begin{figure*}[ht]
    \centering
\includegraphics[width=1.0\columnwidth]{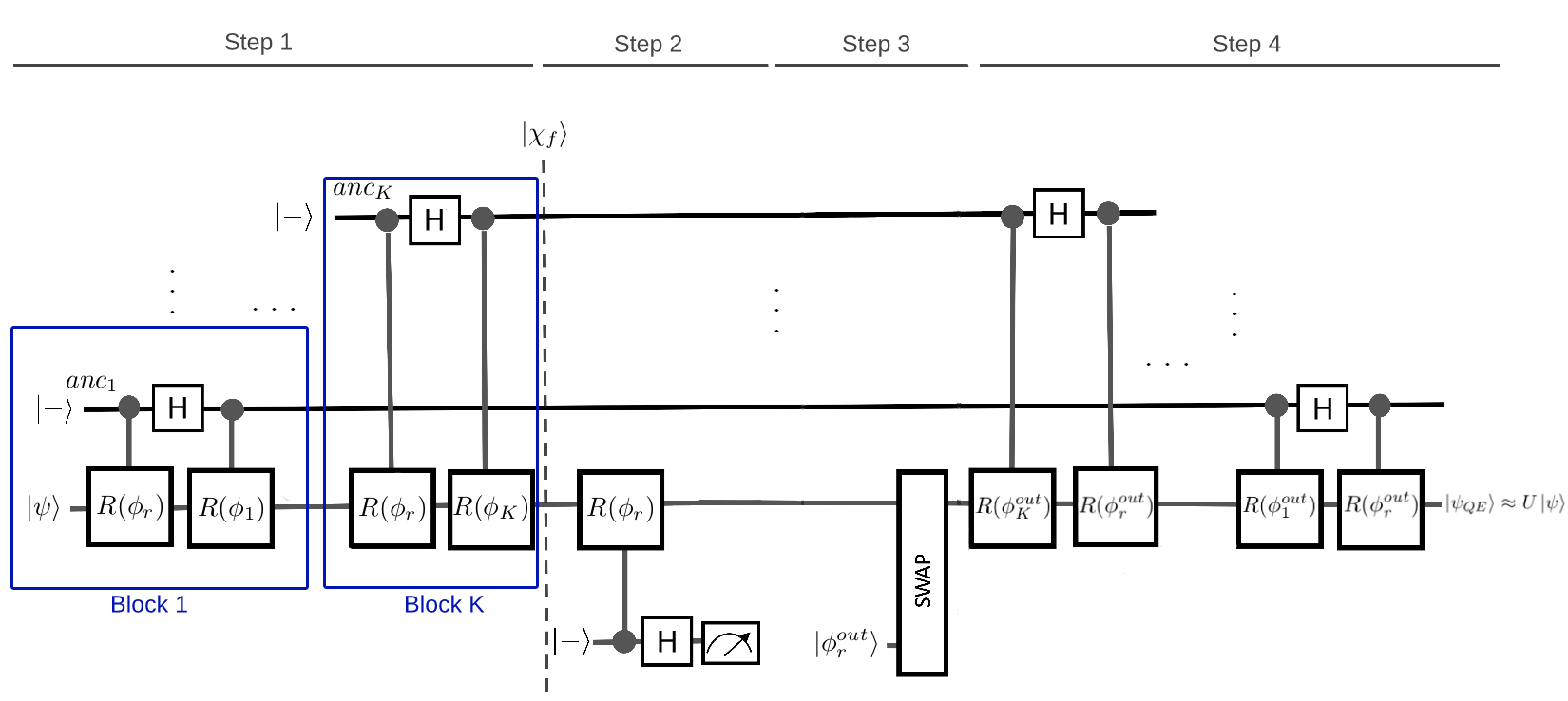}
    \caption{
    The circuit of the quantum emulation algorithm. $\ket{\phi_r}$ is the reference state and $\ket{\phi^{out}_r}$ is the output of the reference state. $R(*)$ gates are controlled-reflection gates. In each block of Step 1, a reflection around the reference and another sample state is being performed.
    }
    \label{fig1-qe}
\end{figure*}

The main building blocks of the algorithm are controlled-reflection gates described as:
\begin{equation}
    R_c(\phi) = \ket{0}\bra{0} \otimes \mathbb{I} + \ket{1}\bra{1} \otimes e^{i\pi\ket{\phi}\bra{\phi}}
\end{equation}
A controlled-reflection gate acts as the identity ($\mathbb{I}$) if the control qubit is $\ket{0}$, and as $R(\phi) = e^{i\pi\ket{\phi}\bra{\phi}} = \mathbb{I} - 2\ket{\phi}\bra{\phi}$ if the control qubit is $\ket{1}$.
The circuit also uses Hadamard and SWAP gates and consists of four stages.

\noindent\textbf{Stage 1.} $K$ number of sample states and a specific number of ancillary qubits are chosen and used through the algorithm. We assume the algorithm uses all of the states in $\Sin$. The ancillary systems are all qubits prepared at $\ket{-}$. Let $\ket{\phi_r} \in \Sin$ be considered as the reference state. This state can be chosen at random or according to a special distribution. The first step consists of $K-1$ blocks wherein each block, the following gates run on the state of the system and an ancilla:
    \begin{equation}\label{eq:qe-block}
        W(i) = R_c(\phi_i) H R_c(\phi_r).
    \end{equation}
    In each block represented by equation~(\ref{eq:qe-block}), a controlled-reflection around the reference state $\ket{\phi_r}$ is performed on $\ket{\psi}$ with the control qubit being on the $\ket{-}$ ancillary state. Then a Hadamard gate (H) runs on the ancilla followed by another controlled-reflection around the sample state $\ket{\phi_i}$. This is repeated for each of the $K$ states in $\Sin$ such that the input state is being entangled with the ancillas and also it is being projected into the subspace $\Hild$ in a way that the information of $\ket{\psi}$ is encoded in the coefficients of the general entangled state. This information is the overlap of $\ket{\psi}$ with all the sample inputs. By reflecting around the reference state in each block, the main state is pushed to $\ket{\phi_r}$ and the probability of finding the system at the reference state increases. The overall state of the circuit after Stage 1 is:
    \begin{equation}
        [W(K)...W(1)]\ket{\psi}\ket{-}^{\otimes K} \approx \ket{\phi_r}\ket{\Omega(anc)}
    \end{equation}
    where $\ket{\Omega(anc)}$ is the entangled state of $K$ ancillary qubits. The approximation comes from the fact that the state is not only projected on the reference quantum state but it is also projected on other sample quantum states with some probability. We present a more precise formula in the next subsection.\\
    
\noindent\textbf{Stage 2.} In this stage, first a reflection around $\ket{\phi_r}$ is performed and after applying a Hadamard gate on an extra ancilla, that ancilla is measured in the computational basis $\{\ket{0}, \ket{1}\}$. Based on the output of the measurement, one can decide whether the first step was successful (i.e. the output of the measurement is 0) or not. If the first step is successful, the main state has been pushed to the reference state. In this case, the algorithm proceeds with Stage 3. If the output is 1, the projection was unsuccessful and the input state remains almost unchanged. In this case, either the algorithm aborts or it goes back to the first stage and picks a new state as the reference. This stage has a post-selection role which can be skipped to output a mixed state of two possible outputs.\\
    
\noindent\textbf{Stage 3.} The main state is swapped with $\ket{\phi^{out}_r} = \U\ket{\phi_r}$ that is the output of the reference state. This is done by means of a SWAP gate. At this point, the overall state of the system is:
\begin{equation}
(\mathrm{SWAP}\otimes I^{\otimes K}) \ket{\phi^{out}_r}\ket{\phi_r}\ket{\Omega(anc)} = \ket{\phi_r}\ket{\phi^{out}_r}\ket{\Omega(anc)}.
\end{equation}
By tracing out the first qubit, the state of the system becomes $\ket{\phi^{out}_r}\ket{\Omega(anc)}$.\\
    
\noindent\textbf{Stage 4.} The last stage is very similar to the first one except that all blocks are run in reverse order and the reflection gates are made from corresponding output quantum states. The action of stage 4 is equivalent to:
\begin{equation}
W^{out}(i) = R_c(\phi^{out}_i) H R_c(\phi^{out}_r) = (\U \otimes I)W(i)(\U^{\dagger} \otimes \mathbb{I}).
\end{equation}
After repeating this gate for all the output samples, $\U$ is acted on the projected components of $\ket{\psi}$ and by restoring back the information of $\ket{\psi}$ from the ancilla, the input state approaches $\U\ket{\psi}$. The overall output state of the circuit at the end of this stage is:
\begin{equation}
[W^{out}(1)...W^{out}(K)]\ket{\phi^{out}_r}\ket{\Omega(anc)} \approx \U\ket{\psi}\ket{-}^{\otimes K}
\end{equation}
where equality is obtained whenever the success probability of Stage 2 is equal to 1. 

\subsection{Output fidelity analysis}
We are interested in the fidelity of the output state $\ket{\psi_{QE}}$ of the algorithm and the intended output $\U\ket{\psi}$ to estimate the success. In the original paper, the fidelity analysis is first provided for ideal controlled-reflection gates and later a protocol is presented to implement them efficiently. In this paper, as we are more interested in the theoretical bounds for the fidelity, all the gates including the controlled-reflection gates are assumed to be ideal keeping in mind that the implementation is possible \cite{marvian2016universal,lloyd2014quantum}. We recall the main theorem of~\cite{marvian2016universal}: 

\begin{theorem}\label{th:qe-fidel}\textbf{\cite{marvian2016universal}}
Let $\E_{\U}$ be the quantum channel that describes the overall effect of the algorithm presented above. Then for any input state $\rho$, the Uhlmann fidelity of $\E_{\U}(\rho)$ and the desired state $\U\rho\U^{\dagger}$ satisfies:
\begin{equation}\label{eq:qe-fidel}
    F(\rho_{QE}, \U\rho\U^{\dagger}) \geq F(\E_{\U}(\rho), \U\rho\U^{\dagger}) \geq \sqrt{P_{succ-stage1}}
\end{equation}
where $\rho_{QE} = \ket{\psi_{QE}}\bra{\psi_{QE}}$ is the main output state(tracing out the ancillas) when the post-selection in Stage 2 has been performed. $\E_{\U}(\rho)$ is the output of the whole circuit without the post-selection measurement in Stage 2 and $P_{succ-stage1}$ is the success probability of Stage 1.
\end{theorem}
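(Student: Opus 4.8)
The plan is to present the circuit as a unitary dilation, exploit the convex decomposition of the output that the Stage~2 measurement induces, and then obtain both inequalities from elementary properties of the Uhlmann fidelity, with all of the analytic content sitting in a single gate-level identity for the four stages. The first step is to replace the Stage~2 measurement by a coherent copy onto a flag qubit, so that Stages~1--4 compose into one unitary $\mathbf{V}$ acting on the system register together with the work ancillas (initialised in $\ket{-}^{\otimes K}$) and the flag; then $\E_{\U}(\rho)=\mathrm{Tr}_{\mathrm{anc}}\big(\mathbf{V}(\rho\otimes\ket{\mathrm{init}}\!\bra{\mathrm{init}})\mathbf{V}^{\dagger}\big)$, while $\rho_{QE}$ is what the same purification gives after the flag is measured and the outcome $\ket{0}$ kept, an event of probability $P_{succ-stage1}$. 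Thus $\E_{\U}(\rho)=P_{succ-stage1}\,\rho_{QE}+(1-P_{succ-stage1})\,\sigma_{\mathrm{fail}}$, a mixture of the post-selected (``success'') branch and the complementary (``failure'') branch. It suffices to treat a pure input $\rho=\ket{\psi}\!\bra{\psi}$, so that the target $\U\ket{\psi}\!\bra{\psi}\U^{\dagger}=\ket{w}\!\bra{w}$ with $\ket{w}=\U\ket{\psi}$ is pure and $F(A,\ket{w}\!\bra{w})=\sqrt{\langle w|A|w\rangle}$ for every state $A$; the general case follows by convexity.

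For the second and main inequality, set $\ket{\Phi_{\mathrm{out}}}=\mathbf{V}(\ket{\psi}\otimes\ket{\mathrm{init}})$, so that $F(\E_{\U}(\rho),\U\rho\U^{\dagger})=\big\|(\bra{w}\otimes\mathbb{I}_{\mathrm{anc}})\ket{\Phi_{\mathrm{out}}}\big\|$, and observe that for any unit ancilla vector $\ket{f}$ this norm is at least the modulus of the overlap between $\ket{\Phi_{\mathrm{out}}}$ and $\ket{w}\otimes\ket{f}$. Take $\ket{f}$ to be the ``ideal'' post-run configuration --- work ancillas returned to $\ket{-}^{\otimes K}$, flag in $\ket{0}$ --- and propagate this overlap backwards through Stages~4, 3, 2, 1. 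The identity $W^{out}(i)=(\U\otimes\mathbb{I})W(i)(\U^{\dagger}\otimes\mathbb{I})$ makes the conjugating copies of $\U$ cancel one after another, so that --- up to the reference swap carried out in Stage~3 --- Stage~4 acts as $\U$ composed with the reversal of Stage~1; what survives is the inner product of the post-Stage~1 state $\ket{\Psi}=[W(K)\cdots W(1)]\ket{\psi}\ket{-}^{\otimes K}$ with its component whose system register equals $\ket{\phi_r}$. That component has norm $\sqrt{P_{succ-stage1}}$ --- precisely the success probability of Stage~1 as defined above --- and since the reference factor $\ket{\phi_r}\mapsto\U\ket{\phi_r}=\ket{\phi^{out}_r}$ is carried through Stages~3--4 by unitaries alone, no further amplitude leaks, so the overlap has modulus at least $\sqrt{P_{succ-stage1}}$ and hence $F(\E_{\U}(\rho),\U\rho\U^{\dagger})\geq\sqrt{P_{succ-stage1}}$.

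The first inequality is then immediate: for pure $\rho$, $F(\E_{\U}(\rho),\ket{w}\!\bra{w})^{2}=P_{succ-stage1}\,\langle w|\rho_{QE}|w\rangle+(1-P_{succ-stage1})\,\langle w|\sigma_{\mathrm{fail}}|w\rangle\leq\langle w|\rho_{QE}|w\rangle=F(\rho_{QE},\ket{w}\!\bra{w})^{2}$, using only $\langle w|\sigma_{\mathrm{fail}}|w\rangle\leq\langle w|\rho_{QE}|w\rangle$, i.e.\ that the failure branch --- on which the circuit barely alters the input --- is no closer to $\U\rho\U^{\dagger}$ than the success branch, which is the output the algorithm is built to produce; chaining the two bounds gives~\eqref{eq:qe-fidel}. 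The one genuinely technical step is the gate-level bookkeeping behind the overlap computation: one must verify that the four stages of controlled-reflection, Hadamard and SWAP gates indeed compose so that Stage~4 undoes Stage~1 on the branch selected by Stage~2, that the surviving system register then holds $\U\ket{\psi}$ (exactly when $P_{succ-stage1}=1$, only approximately otherwise --- this is what the ``$\approx$'' in the Stage~1 identity reflects), and that the normalisation is tracked carefully enough to produce the amplitude $\sqrt{P_{succ-stage1}}$ rather than the probability $P_{succ-stage1}$.
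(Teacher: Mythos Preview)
The paper does not prove this theorem; it is quoted as the main result of \cite{marvian2016universal} (see the sentence ``We recall the main theorem of~\cite{marvian2016universal}'' immediately preceding the statement), and the only use the paper makes of its proof is to extract the expression~\eqref{eq:qe-ps} for $P_{succ-stage1}$. So there is no in-paper argument to compare your attempt against.

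That said, your route to the second inequality is the natural one and matches the Marvian--Lloyd strategy: purify the channel, lower-bound the fidelity by the overlap with a carefully chosen ancilla configuration, and use the conjugation identity $W^{out}(i)=(\U\otimes\mathbb{I})W(i)(\U^{\dagger}\otimes\mathbb{I})$ to make Stage~4 undo Stage~1 on the success branch, leaving precisely the amplitude on the $\ket{\phi_r}$ component of the post-Stage~1 state. The bookkeeping you flag as ``genuinely technical'' is exactly where the work lies, but the plan is sound.

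Your argument for the first inequality, however, has a real gap. After writing $\E_{\U}(\rho)=P_{succ-stage1}\,\rho_{QE}+(1-P_{succ-stage1})\,\sigma_{\mathrm{fail}}$ and expanding $F^2$, you assert $\langle w|\sigma_{\mathrm{fail}}|w\rangle\leq\langle w|\rho_{QE}|w\rangle$ on the grounds that ``the failure branch\ldots is no closer to $\U\rho\U^{\dagger}$ than the success branch, which is the output the algorithm is built to produce.'' That is the \emph{statement} of the inequality, not a proof: nothing in your setup excludes, a priori, that on some inputs the failure branch --- which still passes through the SWAP in Stage~3 and all of Stage~4 --- accidentally lands closer to $\ket{w}$ than the success branch does. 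To close this you would need either a direct upper bound on $\langle w|\sigma_{\mathrm{fail}}|w\rangle$ (by actually tracking what Stages~3--4 do when the Stage~2 flag reads $1$) or an independent lower bound on $\langle w|\rho_{QE}|w\rangle$ strong enough to dominate the convex combination; the heuristic ``barely alters the input'' does not supply either.
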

For the purpose of this paper, we need a more precise and concrete expression for the output fidelity not covered in \cite{marvian2016universal}. From the proof of Theorem~\ref{th:qe-fidel} in \cite{marvian2016universal}, it can be seen that the success probability of Stage 1 is calculated as follows:
\begin{equation}\label{eq:qe-ps}
P_{succ-stage1} = |\bra{\phi_r}Tr_{anc}(\ket{\chi_f}\bra{\chi_f})\ket{\phi_r}|^2
\end{equation}
where $\ket{\chi_f}$ is the final state of the circuit after Stage 1 and $Tr_{anc}(\cdot)$ computes the reduced density matrix by tracing out the ancillas. The overlap of the resulting state and the reference state equals the success probability of Stage 1. Now relying on Theorem~\ref{th:qe-fidel}, we only use equation~(\ref{eq:qe-ps}) for our analysis henceforward.

The fidelity of the output state of the circuit highly depends on the choice of the reference state (equation~(\ref{eq:qe-ps})) such that it may increase or decrease the success probability of the adversary in different security models as we will discuss in the Section \ref{sec:qPUF}. We establish the following recursive relation for the state of the circuit after the $i$-th block of Stage 1, in terms of the previous state:
\begin{equation} \label{eq:qe-recur}
        \ket{\chi_i} = \frac{1}{2}[(I - R(\phi_r))\ket{\chi_{i-1}}\ket{0} + R(\phi_i)(\mathbb{I} + R(\phi_r))\ket{\chi_{i-1}}\ket{1}].
\end{equation}

Now by using this relation, we can prove the following theorem. The proof can be found in Appendix~\ref{app:qe-final-state-proof}

\begin{theorem}\label{th:qe-fins}
Let $\ket{\chi_K}$ be the output state of $K$-th block of the circuit  (Figure~\ref{fig1-qe}). Let $\ket{\psi}$ be the input state of the circuit, $\ket{\phi_r}$ the reference state and $\ket{\phi_i}$ other sample states. We have:
\begin{widetext}
\begin{equation}\label{eq:qe-fins}
\begin{split}
    \ket{\chi_K} &= \mbraket{\phi_r}{\psi} \ket{\phi_r}\ket{0}^{\otimes K} + \ket{\psi}\ket{1}^{\otimes K} - \mbraket{\phi_r}{\psi} \ket{\phi_r}\ket{1}^{\otimes K} \\
    & + \sum^{K}_{i=1}\sum^{i}_{j=0} [f_{ij} 2^{l_{ij}} |\mbraket{\phi_r}{\psi}|^{x_{ij}} |\mbraket{\phi_i}{\psi}|^{y_{ij}} |\mbraket{\phi_r}{\phi_i}|^{z_{ij}}]\ket{\phi_r}\ket{q_{anc}(i,j)} \\
    & + \sum^{K}_{i=1}\sum^{i}_{j=0} [g_{ij} 2^{l'_{ij}} |\mbraket{\phi_r}{\psi}|^{x'_{ij}} |\mbraket{\phi_i}{\psi}|^{y'_{ij}} |\mbraket{\phi_r}{\phi_i}|^{z'_{ij}}]\ket{\phi_i}\ket{q'_{anc}(i,j)}
\end{split}
\end{equation} 
\end{widetext}
where $l_{ij}$, $x_{ij}$, $y_{ij}$, $z_{ij}$, $l'_{ij}$, $x'_{ij}$, $y'_{ij}$ and $z'_{ij}$ are integer values indicating the power of the terms of the coefficient. Note that $f_{ij}$ and $g_{ij}$ can be 0, 1 or -1 and $q_{anc}(i,j)$ and $q'_{anc}(i,j)$ output a computational basis of $K$ qubits (other than $\ket{0}^{\otimes K}$).
\end{theorem}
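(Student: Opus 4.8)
\emph{Proof plan.} The plan is to prove the formula by induction on the block index $K$, driven by the recursion~(\ref{eq:qe-recur}). The first thing to record is that, because $R(\phi_r)=\mathbb{I}-2\ket{\phi_r}\bra{\phi_r}$, the two operators in~(\ref{eq:qe-recur}) simplify to $\frac{1}{2}(\mathbb{I}-R(\phi_r))=\ket{\phi_r}\bra{\phi_r}$ and $\frac{1}{2}(\mathbb{I}+R(\phi_r))=\mathbb{I}-\ket{\phi_r}\bra{\phi_r}$: so one block acts on the system register of $\ket{\chi_{i-1}}$ as the projector onto $\ket{\phi_r}$ in the new‑ancilla‑$\ket{0}$ branch, and as ``project onto $\phi_r^{\perp}$, then reflect about $\ket{\phi_i}$ via $R(\phi_i)=\mathbb{I}-2\ket{\phi_i}\bra{\phi_i}$'' in the new‑ancilla‑$\ket{1}$ branch. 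The base case $K=1$, with $\ket{\chi_0}=\ket{\psi}$, is a direct expansion that yields exactly the three ``seed'' terms $\mbraket{\phi_r}{\psi}\ket{\phi_r}\ket{0}+\ket{\psi}\ket{1}-\mbraket{\phi_r}{\psi}\ket{\phi_r}\ket{1}$ together with the two $i=1$ sum terms $-2\mbraket{\phi_1}{\psi}\ket{\phi_1}\ket{1}$ and $+2\mbraket{\phi_r}{\psi}\mbraket{\phi_1}{\phi_r}\ket{\phi_1}\ket{1}$, all of the stated shape.

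For the inductive step I would feed a $\ket{\chi_{K-1}}$ of the stated form (on $K-1$ ancillas) into one more block and group the output by its system component, which after every block is one of $\ket{\phi_r}$, the current sample $\ket{\phi_K}$, or --- only for the untouched seed terms --- $\ket{\psi}$. In the $\ket{0}$ branch, $\ket{\phi_r}\bra{\phi_r}$ keeps every $\ket{\phi_r}$ component and collapses every $\ket{\phi_i}$ or $\ket{\psi}$ component onto $\ket{\phi_r}$ at the cost of one extra factor $\mbraket{\phi_r}{\phi_i}$ or $\mbraket{\phi_r}{\psi}$; in particular $\mbraket{\phi_r}{\psi}\ket{\phi_r}\ket{0}^{\otimes(K-1)}\mapsto\mbraket{\phi_r}{\psi}\ket{\phi_r}\ket{0}^{\otimes K}$ (the new first seed term), and the images of $\ket{\psi}\ket{1}^{\otimes(K-1)}$ and $-\mbraket{\phi_r}{\psi}\ket{\phi_r}\ket{1}^{\otimes(K-1)}$ cancel, so no stray $\ket{\phi_r}\ket{1}^{\otimes(K-1)}\ket{0}$ term appears. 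In the $\ket{1}$ branch, $\mathbb{I}-\ket{\phi_r}\bra{\phi_r}$ annihilates the first and third seed terms and replaces each surviving component $\ket{v}$ by $\ket{v}-\mbraket{\phi_r}{v}\ket{\phi_r}$, after which $R(\phi_K)$ sends $\ket{v}\mapsto\ket{v}-2\mbraket{\phi_K}{v}\ket{\phi_K}$; tracking $\ket{v}=\ket{\psi}$ regenerates the two surviving seed terms at length $K$ plus the two new $i=K$ sum terms, while each old sum term either propagates with one more $\ket{1}$ appended to the ancilla or spawns a $\ket{\phi_K}$ term carrying one extra factor of $-2$ times an inner product.

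It then remains to run the bookkeeping: (i) every coefficient stays a member of $\{0,1,-1\}$ times a power of $2$ times a monomial in $|\mbraket{\phi_r}{\psi}|$, $|\mbraket{\phi_i}{\psi}|$, $|\mbraket{\phi_r}{\phi_i}|$, because each branch operator contributes at most one new inner‑product factor and at most one new factor $\pm2$, and any pair $\mbraket{\phi_i}{\phi_r}\mbraket{\phi_r}{\phi_i}$ that is generated collapses to $|\mbraket{\phi_r}{\phi_i}|^{2}$, preserving the magnitude‑monomial shape (residual phases being absorbed into the $\pm1$ sign, or removed by the obvious phase convention on the sample states --- only the magnitudes enter the downstream use of~(\ref{eq:qe-ps}) anyway); (ii) the ancilla labels $q_{anc}(i,j),q'_{anc}(i,j)$ are never $\ket{0}^{\otimes K}$, since a $\ket{\phi_r}$ sum term always inherits at least one $\ket{1}$ from an earlier $\ket{1}$ branch while a $\ket{\phi_i}$ term has its last ancilla fixed to $\ket{1}$; and (iii) for each $i$ the secondary index $j$ runs only over $0,\dots,i$, which is just the count of distinct coefficient shapes the recursion can produce for a fixed sample index --- at most one new one per later block --- and is carried along inductively.

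The real obstacle is combinatorial control, not any single clever step: a block roughly doubles the number of terms, so the induction is only tractable if one never writes coefficients explicitly but instead propagates the ``shape data'' $(f_{ij},l_{ij},x_{ij},y_{ij},z_{ij})$ and $(g_{ij},l'_{ij},x'_{ij},y'_{ij},z'_{ij})$ and shows the update rules for this data under the two branch operators close on the allowed set, with $\ket{\phi_r}\ket{0}^{\otimes K}$, $\ket{\psi}\ket{1}^{\otimes K}$, $\ket{\phi_r}\ket{1}^{\otimes K}$ treated as a never‑absorbed seed. The one place needing genuine care is that the $\phi_r$‑orthogonalisation and $R(\phi_i)$ also create overlaps between two distinct sample states; one checks these fold back into the $\ket{\phi_r}$ and $\ket{\phi_i}$ buckets without spoiling the listed form (and, since only the $\ket{\phi_r}$ components feed~(\ref{eq:qe-ps}), any surviving cross‑sample factors are harmless for everything this theorem is used for).
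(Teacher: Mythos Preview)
Your approach is essentially identical to the paper's: both prove the formula by induction on $K$ using the recursion~(\ref{eq:qe-recur}), verifying the base case $K=1$ by direct expansion and then tracking how each constituent of $\ket{\chi_{K-1}}$ propagates through one additional block via the five terms generated by $\ket{\phi_r}\bra{\phi_r}$ and $R(\phi_K)(\mathbb{I}-\ket{\phi_r}\bra{\phi_r})$. Your treatment is in fact more careful than the paper's own sketch --- the cross-sample overlap issue you flag (factors $\mbraket{\phi_K}{\phi_i}$ with $i<K$, which do not literally fit the stated monomial shape) is glossed over there too, and your pragmatic resolution that such terms do not affect the $\ket{\phi_r}$ component feeding~(\ref{eq:qe-ps}) is the appropriate attitude given how the theorem is actually used.
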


Having a precise expression for $\ket{\chi_f}$ from Theorem~\ref{th:qe-fins}, one can calculate $P_{succ-step1}$ of equation~(\ref{eq:qe-ps}) by tracing out all the ancillary systems from the density matrix of $\ket{\chi_f}\bra{\chi_f}$. Also, now it is clear that if $\ket\psi$ is orthogonal to the $\Hild$, the only term remaining in equation~(\ref{eq:qe-fins}) is $\ket{\psi}\ket{1}^{\otimes K}$. So, the input state remains unchanged after the first stage and $P_{succ-step1}=0$.

For states projected in the subspace spanned by $\Sin$, the overall channel describing the quantum emulation algorithm has always a fixed point inside the subspace \cite{marvian2016universal}. Hence, Stage 1 is successful with probability close to 1 by assuming the gates to be ideal. 

\section{Quantum Physical Unclonable Functions}\label{sec:qPUF}
We consider a set of quantum devices that have been created through the same manufacturing process.
These devices respond with a general quantum state when challenged with a quantum state. Similar to the classical setting (see Appendix A), we formalize the manufacturing process of qPUFs by defining a $\qPUFGen$ algorithm:
\begin{equation}\label{eq:qPUFGen}
    \qPUFid\leftarrow \qPUFGen(\lambda)
\end{equation}
where $\mathbf{id}$ is the identifier of $\qPUFid$ and $\lambda$ the security parameter. 

We also need to define the $\qPUFEval$ algorithm mapping any input quantum state $\rhoin\in\HildIn$ to an output quantum state $\rhoout\in\HildOut$ where $\HildIn$ and $\HildOut$ are the domain and range Hilbert spaces of $\qPUFid$, denoted as:
\begin{equation}\label{eq:qPUFEval}
       \rhoout \leftarrow \qPUFEval(\qPUFid,\rhoin).
\end{equation}
 
For now, we allow for the most general form of trace-preserving quantum maps, i.e. CPT maps for $\qPUFEval$. So, we have:
\begin{equation}
    \rhoout = \Lambda_{\id}(\rhoin)
\end{equation}

Apart from these common algorithms (that are analogue to the classical setting), we also require qPUFs to include an efficient test algorithm $\T$ as we will formally define in Definition \ref{def:test} to test the equality between two unknown quantum states. We will also need the concept of quantum state distinguishability, which can be defined with different quantum distance measures such as trace distance or fidelity. Here we use the fidelity-based definition as follows: Let $F(\cdot,\cdot)$ denote the fidelity, and $\mu$ and $\nu$ the distinguishability and indistinguishability threshold parameters respectively such that $0 \leq \mu, \nu \leq 1$. We say two quantum states $\rho$ and $\sigma$ are $\mu$-distinguishable if $0\le F(\rho,\sigma) \leq 1-\mu$ and $\nu$-indistinguishable if $\nu \le F(\rho,\sigma) \le 1$. Finally, we can define a Quantum Physical Unclonable Function as follows. 


\begin{definition}[Quantum Physical Unclonable Function]\label{def:qPUF}
Let $\lambda$ be the security parameter, and $\delta_r, \delta_u, \delta_c\in[0,1]$ the robustness, uniqueness and collision resistance thresholds. A ($\lambda,\delta_r,\delta_u,\delta_c$)-qPUF includes the algorithms: $\qPUFGen$, $\qPUFEval$ and $\T$ satisfying Requirements \ref{def:robust}, \ref{def:uniq}, and \ref{def:col-res} defined below:
\end{definition}
\begin{requirement}[$\delta_r$-Robustness]\label{def:robust} For any $\qPUFid$ generated through $\qPUFGen(\lambda)$ and evaluated using $\qPUFEval$ on any two input states $\rhoin$ and $\sigmain$ that are $\delta_r$-indistinguishable, the corresponding output quantum states $\rhoout$ and $\sigmaout$ are also $\delta_r$-indistinguishable with overwhelming probability,
\begin{equation}
 \mathrm{Pr}[\delta_r\le F(\rhoout, \sigmaout)\le 1] = 1-\negl(\lambda).
\end{equation}
\end{requirement}

\begin{requirement}[\textbf{$\delta_u$-Uniqueness}]\label{def:uniq} For any two $\qPUF$s generated by the $\qPUFGen$ algorithm, i.e. $\qPUFidi$ and $\qPUFidj$, the corresponding CPT map models, i.e. $\Lambda_{id_i}$ and $\Lambda_{id_j}$ are $\delta_u$-distinguishable with overwhelming probability, 
\begin{equation}
\mathrm{Pr}[\ \parallel (\Lambda_{id_i} - \Lambda_{id_j})_{i\neq j}\parallel_\diamond \ge \delta_{u} \ ]=1-\negl(\lambda).
\end{equation}
\end{requirement}

\begin{requirement}
[\textbf{$\delta_c$-Collision-Resistance (Strong)}]\label{def:col-res} For any $\qPUFid$ generated by $\qPUFGen(\lambda)$ and evaluated by $\qPUFEval$ on any two input states $\rhoin$ and $\sigmain$ that are $\delta_c$-distinguishable, the corresponding output states $\rhoout$ and $\sigmaout$ are also $\delta_c$-distinguishable with overwhelming probability,\footnote{A weaker variant of Collision-Resistance, with separate input/output bound can be also defined in a similar fashion where the responses generated by $\qPUFEval$ on any two $\delta^i_c$-distinguishable input states $\rhoin$ and $\sigmain$, should be at least $\delta^o_c$-distinguishable. In fact, if $\delta^i_c = \delta^o_c = \delta_c$ we call the requirement a strong collision-resistance. Note that this equality holds up to a negligible value in the security parameter, i.e. if $\delta^i_c = \delta^o_c \pm \negl(\lambda)$, the strong collision-resistance requirement has still been satisfied. If $\delta^o_c < \delta^i_c$ (the difference is non-negligible) then this is referred to as weak collision-resistance.}
\begin{equation}
\mathrm{Pr}[0\le F(\rhoout, \sigmaout)\le 1-\delta_c] = 1-\negl(\lambda).
\end{equation}
\end{requirement}

In qPUF-based applications such as device authentication (or identification), it is necessary that there be a clear distinction between different qPUF instances generated by the same $\qPUFGen$ algorithm running on the same parameters $\lambda$ \cite{armknecht2016towards}. To this end, the following conditions should be satisfied: $\delta_c \leq 1-\delta_r$ and $\delta_u \leq 1-\delta_r$. So, we can drop $\delta_u$ and $\delta_c$ from the notation and characterize the qPUF as $(\lambda,\delta_r)-\qPUF$.

We also need to mention that, $\delta_r$ and $\delta_c$ parameters can allow for some specific noise models for each PUF device. More specifically, the collision resistance parameter i.e. $\delta_c$ or the ratio of $\delta^o_c / \delta^i_c$ is directly related to the channel parameters of the qPUF evaluation. Although, as the collision-resistance is an important requirement for achieving a secure PUF, similar to classical PUFs, we choose the strong collision-resistance as the main requirement for the quantum PUF. We specify that the strong collision-resistance parameter can allow for noisy PUF evaluation under the coherent noise models. Such noise models preserve distances between the input and output states of the qPUF and this property makes them suitable candidates for quantum PUF. Also, it has been shown in~\cite{greenbaum2017modeling} that a general noise can be modelled as a combination of coherent and incoherent noises. Hence only the class of noise model with an almost close to zero incoherent factor can be considered to satisfy the $\delta_c$ (strong) collision resistance. Hence for the rest of this work, aiming to formalise the first general security framework, we consider a noiseless setting and leave further investigation that would be linked to particular construction to future works.

We have initially allowed for any CPT map as $\qPUFEval$ algorithm. Now, we let the $\qPUFEval$ algorithm be a CPT map with the same dimension of domain and range Hilbert space, i.e. $d_{in}=d_{out}$. We show that under this assumption, only unitary transformations and CPT maps that are negligibly close to unitary, can simultaneously provide the (strong)collision-resistance and robustness requirements of qPUFs. 

\begin{theorem}\label{theorem:non-unitary}
Let $\E(\rho)$ be a completely positive and trace-preserving (CPT) map described as follows:
\begin{equation}\label{eq:non-unitary-puf}
    \E(\rho) = (1-\epsilon)U \rho U^{\dagger} + \epsilon \Tilde{\E}(\rho) 
\end{equation}
where $U$ is a unitary transformation, $\Tilde{\E}$ is an arbitrary (non-negligibly) contractive channel and $0 \leq \epsilon \leq 1$. Then $\E(\rho)$ is a ($\lambda,\delta_r,\delta_c$)-qPUF for any $\lambda$, $\delta_r$, and $\delta_c$ and with the same dimension of domain and range Hilbert space, if and only if $\epsilon = \negl(\lambda)$.
\end{theorem}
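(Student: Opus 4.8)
The plan is to prove the two implications separately; the \emph{if} direction is routine and the \emph{only if} direction carries all the content. Suppose first that $\epsilon=\negl(\lambda)$. Then $\|\E-U(\cdot)U^{\dagger}\|_{\diamond}=\epsilon\,\|\tilde{\E}-U(\cdot)U^{\dagger}\|_{\diamond}\le 2\epsilon=\negl(\lambda)$, so $\E$ is negligibly close in diamond norm to the unitary channel $\rho\mapsto U\rho U^{\dagger}$, which preserves the fidelity exactly, $F(U\rho U^{\dagger},U\sigma U^{\dagger})=F(\rho,\sigma)$, and hence satisfies Requirements~\ref{def:robust} and~\ref{def:col-res} for every choice of thresholds. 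Since the outputs of $\E$ differ from those of the unitary channel by at most $\negl(\lambda)$ in trace distance, the corresponding output fidelities differ by at most $\negl(\lambda)$ as well (Fuchs--van de Graaf together with the triangle inequality for the Bures metric), so both requirements hold within the negligible tolerance already built into them and $\E$ is a $(\lambda,\delta_r,\delta_c)$-qPUF for all $\lambda,\delta_r,\delta_c$; the hypothesis $d_{in}=d_{out}$ is automatic here since $U$ acts on the common domain/range space.

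For the \emph{only if} direction I argue the contrapositive: if $\epsilon$ is non-negligible then $\E$ violates collision-resistance. First, robustness is automatic for \emph{any} CPT map, since monotonicity of the fidelity gives $F(\E(\rho),\E(\sigma))\ge F(\rho,\sigma)$, so $\delta_r$-indistinguishable inputs always yield $\delta_r$-indistinguishable outputs; the whole burden is therefore on collision-resistance. Next, conjugating $\E$ by $U^{\dagger}$ alters neither the contractivity of $\tilde{\E}$ nor whether $\E$ is a qPUF (by unitary invariance of fidelity and trace distance), so we may assume $U=\mathbb{I}$, i.e.\ $\E(\rho)=(1-\epsilon)\rho+\epsilon\,\tilde{\E}(\rho)$. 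Because $\tilde{\E}$ is non-negligibly contractive there is a pair of input states $\rho,\sigma$ (possibly depending on $\lambda$) and a non-negligible $\eta$ with $F(\tilde{\E}(\rho),\tilde{\E}(\sigma))\ge F(\rho,\sigma)+\eta$, and joint concavity of the fidelity gives
\begin{equation}
F(\E(\rho),\E(\sigma)) \;\ge\; (1-\epsilon)\,F(\rho,\sigma) + \epsilon\,F(\tilde{\E}(\rho),\tilde{\E}(\sigma)) \;\ge\; F(\rho,\sigma) + \epsilon\eta .
\end{equation}
On the other hand, collision-resistance at the threshold $\delta_c=1-F(\rho,\sigma)$ --- a legitimate instance since $\E$ must be a qPUF for \emph{every} $\delta_c$ --- forces $F(\E(\rho),\E(\sigma))\le 1-\delta_c+\negl(\lambda)=F(\rho,\sigma)+\negl(\lambda)$ up to the negligible slack in Requirement~\ref{def:col-res}. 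As $\epsilon\eta$ is non-negligible, this is a contradiction, so $\epsilon=\negl(\lambda)$. (To keep $\delta_c$ independent of $\lambda$ one takes the witnessing pair orthogonal --- possible when $d_{in}=d_{out}$, by a dimension count on the supports of $\tilde{\E}$ applied to an orthonormal basis --- and then uses $\delta_c=1$.)

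The step I expect to be the main, though modest, obstacle is making \emph{non-negligibly contractive} operational: it should be read as precisely the existence of such a pair $(\rho,\sigma)$ with non-negligible $\eta$, and one should check this follows from the intended meaning --- that $\tilde{\E}$ shrinks the trace distance of some pair by a non-negligible amount --- via Fuchs--van de Graaf, which converts a trace-distance contraction into a fidelity gap (cleanest for an orthogonal witnessing pair, where a contraction of a perfectly distinguishable pair becomes directly a non-negligible output fidelity, and where $d_{in}=d_{out}$ is what prevents the contraction from being concealed in an enlarged range space). The remainder is routine bookkeeping with the negligible slack in Requirements~\ref{def:robust} and~\ref{def:col-res} and with the quantifier over $\delta_c$.
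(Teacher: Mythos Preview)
Your argument is correct and matches the paper's approach closely. Both proofs observe that robustness is automatic from fidelity monotonicity under CPT maps, and both handle the \emph{only if} direction via joint concavity of the fidelity applied to the convex decomposition $\E=(1-\epsilon)U(\cdot)U^{\dagger}+\epsilon\tilde{\E}$, yielding $F(\E(\rho),\E(\sigma))-F(\rho,\sigma)\ge\epsilon\bigl(F(\tilde{\E}(\rho),\tilde{\E}(\sigma))-F(\rho,\sigma)\bigr)$, which is non-negligible on a witnessing pair unless $\epsilon=\negl(\lambda)$. Your reduction to $U=\mathbb{I}$ by conjugation is a clean simplification the paper omits but does not need.

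The one genuine methodological difference is in the \emph{if} direction. You bound $\|\E-U(\cdot)U^{\dagger}\|_{\diamond}\le 2\epsilon$ directly and then invoke Fuchs--van de Graaf plus the Bures triangle inequality to transfer this to output fidelities. The paper instead proves a dedicated lemma (their Lemma~\ref{lemma:epsilon-cpt-trace-distance}) bounding $\mathcal{D}_{tr}(\rho,\sigma)-\mathcal{D}_{tr}(\E(\rho),\E(\sigma))\le\epsilon\,\mathcal{D}_{tr}(\rho,\sigma)$ by comparing to the worst-case depolarising $\tilde{\E}$, and then converts back to fidelity. Your route is shorter and avoids the auxiliary lemma; the paper's route gives a slightly sharper constant but is otherwise equivalent. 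Your closing remark on making ``non-negligibly contractive'' operational via a witnessing pair is exactly the reading the paper adopts (implicitly), so no gap there.
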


\begin{proof}
First, we note that The contractive property of trace-preserving operations \cite{nielsen2010quantum} states that CPT maps on the same Hilbert space, can only preserve or decrease distances thus we have:
\begin{equation}
  F(\mathcal{E}(\rho), \mathcal{E}(\sigma)) \geq F(\rho, \sigma)
\end{equation}
Thus the robustness is generally satisfied. As a result, the proof of the theorem reduces to proving for collision-resistance. Let $\rho$ and $\delta$ be two $\delta_c$-distinguishable challenge with fidelity $F(\rho, \sigma) \leq 1-\delta_c$. Again with the above argument the fidelity of the outputs cannot be smaller than $F(\rho, \sigma)$. Thus the $\delta_c$ requirement is satisfied if the fidelity of the response density matrices are equal up to a negligible value.

Now let $\rho_1 = U\rho U^{\dagger}$, $\sigma_1 = U\sigma U^{\dagger}$, $\rho_2 = \Tilde{\E}(\rho)$, and $\sigma_2 = \Tilde{\E}(\sigma)$. We use the joint concavity of the fidelity~\cite{nielsen2010quantum} to obtain the following relation for the channel's output fidelity:
\begin{equation}\label{eq:fidelity-joint-concavity}
\begin{split}
        F(\E(\rho),\E(\sigma)) & = F((1-\epsilon)\rho_1 + \epsilon \rho_2, (1-\epsilon)\sigma_1 + \epsilon \sigma_2) \\
        & \geq (1-\epsilon)F(\rho_1,\sigma_1) + \epsilon F(\rho_2,\sigma_2)
\end{split}
\end{equation}
Since the first part of the channel is unitary which is distance preserving, we have $F(\rho_1,\sigma_1) = F(\rho,\sigma)$. Also due to contractive property of trace-preserving operations we know that $F(\rho_2,\sigma_2) \geq F(\rho,\sigma)$. We have
\begin{equation}
    F(\E(\rho),\E(\sigma)) - F(\rho,\sigma) \geq \epsilon (F(\rho_2,\sigma_2)-F(\rho,\sigma))
\end{equation}
Now since the channel $\Tilde{\E}$ is non-negligibly contractive, the value $F(\rho_2,\sigma_2)-F(\rho,\sigma)$ is not necessarily negligible and in order for the LHS of ~\ref{eq:fidelity-joint-concavity} to be always negligible, $\epsilon$ has to be negligible. So we have proved that CPT maps of the form~\ref{eq:non-unitary-puf} can be $\delta_c$ collision resistance qPUFs only if $\epsilon = \negl(\lambda)$.

Now we show that all channels of the form of Equation~\ref{eq:non-unitary-puf} where $\epsilon$ is negligible satisfy the strong collision resistance property up to a negligible value. To show that we use the relation between fidelity and trace distance which we denote as $\mathcal{D}_{tr}$, which is $\mathcal{D}_{tr}(\rho, \sigma) \leq \sqrt{1 - F(\rho,\sigma)}$. We use this inequality to relate the distance between the states $\E(\rho)$ and $\E(\sigma)$ and the original distance between $\rho$ and $\sigma$ and we subtract both sides to get the following inequality:
\begin{widetext}
\begin{equation}
\begin{split}
        F(\E(\rho),\E(\sigma)) - F(\rho,\sigma) & \leq \mathcal{D}^2_{tr}(\rho, \sigma) - \mathcal{D}^2_{tr}(\E(\rho),\E(\sigma))\\
        & \leq (\mathcal{D}_{tr}(\rho, \sigma) - \mathcal{D}_{tr}(\E(\rho),\E(\sigma)))(\mathcal{D}_{tr}(\rho, \sigma) + \mathcal{D}_{tr}(\E(\rho),\E(\sigma)))\\
        & \leq 2(\mathcal{D}_{tr}(\rho, \sigma) - \mathcal{D}_{tr}(\E(\rho),\E(\sigma)))
\end{split}
\end{equation}
\end{widetext}

In Appendix~\ref{ap:channel-output-ditance}, Lemma~\ref{lemma:epsilon-cpt-trace-distance} we show that the difference between the trace distance of the input and output for channels described as Equation~\ref{eq:non-unitary-puf}, is bounded by $\epsilon \mathcal{D}_{tr}(\rho, \sigma)$. Thus we have:
\begin{equation}
    F(\E(\rho),\E(\sigma)) - F(\rho,\sigma) \leq 2\epsilon \mathcal{D}_{tr}(\rho, \sigma)
\end{equation}
Now since $\epsilon = \negl(\lambda)$ and $0 \leq \mathcal{D}_{tr}(\rho, \sigma) \leq 1$, we can conclude that the difference between the fidelity is also negligible and hence the $\delta_c$ collision-resistance is satisfied up to a negligible value, and the proof is complete.
\end{proof}

The above theorem shows that only unitary or more generally, \emph{$\epsilon$-disturbed unitary maps} where $\epsilon$ is small, are suitable candidates for qPUF, especially when strong collision resistance is required. Thus, in the rest of the paper, we choose the $\qPUFEval$ algorithm to be a unitary map, and also for simplicity, we establish some of our theorems with pure quantum states, noting that considering the mixed states would not affect the main results. We call this type of qPUFs, Unitary qPUFs (or simply UqPUFs) and formally define them in Definition~\ref{def:uqPUF}. Nevertheless, we believe studying more general non-unitary qPUFs will be interesting future research directions in this field. 

Moreover, we require UqPUF transformations to be initially unknown (or exponentially hard to recover) as we will formally define in Definition~\ref{def:unk-uni}. This is a hardware assumption that is also considered in the classical setting where the PUF behaviour is unknown even for the manufacturer \cite{ruhrmair2014pufs}. Although from a construction point of view, this may not seem an easily achievable requirement, from a practical point of view this assumption is reasonable considering limited fabrication capabilities or the fact that simulating the same unitary on a quantum computer is not technologically easy due to noise or accumulated errors in each gate, even when the structure of the unitary is known. Moreover, there are promising constructions such as the family of optical schemes implemented using crystals or optical scattering media~\cite{nikolopoulos2017continuous}, where usually even the manufacturer does not know the underlying unitary unless querying it. On the other hand, in gate-based construction, one cannot avoid the fact that the manufacturer knows the underlying unitary. Hence this type of constructions cannot provide security against an adversarial manufacturer. Nevertheless, if predicting the evolution of a quantum state is difficult this is enough for security under the usual PUF assumptions. Hence such devices are still useful and practical for many applications as they can still provide security against any malicious adversary other than the manufacturer. We also note that from the theoretical point of view, this requirement is a minimal and pre-challenge requirement that can be achieved by sampling a family of unitaries indistinguishable from the Haar family of unitary transformations in single-shot, and we believe there are efficient ways to do this sampling~\cite{dankert2006c,ambainis2007quantum}. Finally, our framework and results cover both adversarial models where the manufacturer could be trusted or not.

\begin{definition}[Unknown Unitary Transformation]\label{def:unk-uni} We say a family of unitary transformations $U^u$, over a $D$-dimensional Hilbert space $\HilD$ is called Unknown Unitaries, if for all QPT adversaries $\A$ the probability of estimating the output of $U^u$ on any randomly picked state $\ket{\psi}\in\HilD$ is at most negligibly higher than the probability of estimating the output of a Haar random unitary operator on that state:
\begin{widetext}
\begin{equation}
        |\underset{U \leftarrow U^u}{Pr}[F(\A(\ket{\psi}),U\ket{\psi}) \geq \nonnegl(\lambda)] - \underset{U_{\mu} \leftarrow \mu}{Pr}[F(\A(\ket{\psi}),U_{\mu}\ket{\psi})
        \geq \nonnegl(\lambda)]| = \negl(\lambda).
\end{equation}
\end{widetext}
where $\mu$ denotes the Haar measure and the average probability has been taken over al; the states $\ket{\psi}$. 
\end{definition}
Note that UqPUFs also satisfy a natural notion of unclonability, known as no-cloning of unitary transformation \cite{chiribella2008optimal} which states that two black-box unitary transformations $\mathcal{O}_1$ and $\mathcal{O}_2$ cannot be perfectly cloned by a single-use apart from the trivial cases of perfect distinguishability or when $\mathcal{O}_1 = \mathcal{O}_2$. Thus, two UqPUFs, as long as they correspond to different unitaries, which is satisfied by the uniqueness requirement, are unclonable by quantum mechanics through a single-use. In the following section, we then show how this unclonability property can be extended to the case where the transformation has been used multiple times by formally introducing the notion of unforgeability. Thus, we define the unitary qPUFs as follows.

\begin{definition}[Unitary qPUF (UqPUF)]\label{def:uqPUF} A Unitary $\qPUF$ $((\lambda,\delta_r)-\UqPUF)$ is a $(\lambda,\delta_r)-\qPUF$ where the $\qPUFEval$ algorithm is modelled by an unknown unitary transformation
$\U_\id$ over a $D$-dimensional Hilbert space, $\HilD$ operating on pure input quantum states $\ket\psiin\in\HilD$ and returning pure output quantum states $\ket\psiout\in\HilD$,
\begin{equation}
    \ket{\psiout} = \qPUFEval(\UqPUFid, \ket{\psiin}) = \U_\id\ket{\psiin}.
\end{equation}
\end{definition}

As a result of the distance-preserving property of UqPUFs, we drop $\delta_r$ from the notation and simply characterise UqPUF as $\lambda$-UqPUFs.

\subsection{Security notion for qPUFs}\label{sec:UqPUF-game}
The security of most PUF-based applications such as PUF-based identification protocols relies on the unforgeability of PUFs \cite{armknecht2016towards}. Informally, unforgeability means that given a subset of challenge-response pairs of the target PUF, the probability of correctly guessing a new challenge-response pair shall be negligible in terms of the security parameter. In this section, we formally define this security notion for qPUFs in a game-based framework which is a standard framework for defining security of cryptographic primitives and analysing their security \cite{armknecht2016towards,boneh2013secure,delavar2017puf}. 

Accordingly, we define unforgeability as a game between an adversary who represents the malicious party and a challenger who plays the role of the honest party. The game is run in four steps: Setup, Learning, Challenge and Guess. 

In the \emph{setup phase}, the necessary public and private parameters and functions are shared between the adversary and the challenger. 

The \emph{learning phase} models the amount of knowledge that the adversary can get from the challenger. Similar to \cite{armknecht2016towards}, we consider chosen-input attacks modelling an adversary that has access to the qPUF and can query it with his own chosen inputs from the domain Hilbert space. Because of the quantum nature of the adversary's queries, the adversary has to prepare two copies of each query, keep one in his database and send the other one to the challenger. 

The \emph{challenge phase} captures the intended security notion. We consider here two types of challenge phase: Existential and Selective. In an existential challenge phase, the adversary chooses the challenge state while in a selective one, the challenge state is chosen by the challenger. We characterize a "new" existential challenge by imposing the adversary to choose a state that is $\mu$-distinguishable from all the inputs queries in the learning phase. In the selective case, to ensure the adversary has no knowledge about the challenge, we impose the challenger to choose the challenge uniformly at random from the domain Hilbert space. 

Finally, in the \emph{guess phase}, the adversary outputs his guess of the response corresponding to the challenge chosen in the challenge phase. The challenger checks the equality between the adversary's guess and the correct response with a test algorithm. The adversary wins the game if the output of the test algorithm is 1. Due to the impossibility of perfectly distinguishing all quantum states, checking equality of two completely unknown states is a non-trivial task. This is one of the major differences between classical and quantum PUFs. Nevertheless, a probabilistic comparison of unknown quantum states can be achieved through the simple quantum SWAP test algorithm~\cite{buhrman2001quantum}, and its generalisation to multiple copies introduced recently in~\cite{chabaud2018optimal}. Here we abstract from specific tests and define necessary conditions for a general quantum test.

\begin{definition}[Quantum Testing Algorithm]\label{def:test} Let $\rho^{\otimes \kappa_1}$ and $\sigma^{\otimes \kappa_2}$ be $\kappa_1$ and $\kappa_2$ copies of two quantum states $\rho$ and $\sigma$, respectively. A Quantum Testing algorithm $\T$ is a quantum algorithm that takes as input the tuple ($\rho^{\otimes \kappa_1}$,$\sigma^{\otimes \kappa_2}$) and accepts $\rho$ and $\sigma$ as equal (outputs 1) with the following probability 
\begin{equation*}
\small
    \mathrm{Pr}[1 \leftarrow \T(\rho^{\otimes \kappa_1}, \sigma^{\otimes \kappa_2})] = 1 - \mathrm{Pr}[0 \leftarrow \T(\rho^{\otimes \kappa_1}, \sigma^{\otimes \kappa_2})] = f(\kappa_1,\kappa_2, F(\rho, \sigma))
\end{equation*}
where $F(\rho, \sigma)$ is the fidelity of the two states and $f(\kappa_1,\kappa_2, F(\rho, \sigma))$ satisfies the following limits:
\begin{equation}
 \begin{cases}
    \lim_{F(\rho, \sigma) \rightarrow 1}f(\kappa_1,\kappa_2, F(\rho, \sigma)) = 1  \quad \forall\:(\kappa_1,\kappa_2)\\
    \lim_{\kappa_1,\kappa_2 \rightarrow \infty}f(\kappa_1,\kappa_2, F(\rho, \sigma)) = F(\rho, \sigma)\\
    \lim_{F(\rho, \sigma) \rightarrow 0}f(\kappa_1,\kappa_2, F(\rho, \sigma)) = Err(\kappa_1, \kappa_2)
  \end{cases} 
\end{equation}
with $Err(\kappa_1,\kappa_2)$ characterising the error of the test algorithm and $F(\rho, \sigma)$ the fidelity of the states.
\end{definition}

We also define another abstraction of the test algorithm in an ideal case which later helps us to demonstrate the security of the UqPUF. We formalize the ideal test $\Td$ as follows:

\begin{definition}[$\Td$ Test Algorithm]\label{def:test-delta} We call a test algorithm according to Definition~\ref{def:test}, a $\Td$ Test Algorithm when for any two state $\ket{\psi}$ and $\ket{\phi}$ the test responds as follows:
\begin{equation}
 \Td = \begin{cases}
    1  & \:F(\ket{\psi}, \ket{\phi})\geq \delta\\
    0  & \:otherwise
  \end{cases} 
\end{equation}
\end{definition}

Now we are ready to formalize unforgeability through a formal security game. 
\begin{game}[Formal game-based security of qPUF] Let $\qPUF=(\qPUFGen,\qPUFEval,\T)$ and $\T$ be defined as Definition \ref{def:qPUF} and \ref{def:test}, respectively. We define the following game $\Gn{\qPUF}{c}{\mu}(\A,\lambda)$ running between an adversary $\A$ and a challenger $\C$:
\begin{itemize}
    \item [] \textbf{Setup.} The challenger $\C$    
    runs $\qPUFGen(\lambda)$ to build an instance of the qPUF family, $\qPUFid$. Then, $\C$ reveals to the adversary $\A$, the domain and range Hilbert space of $\qPUFid$ respectively denoted by $\Hilin$ and $\Hilout$
    as well as the identifier of $\qPUFid$, $\mathbf{id}$. The challenger initialises two empty databases, $\Sin$ and $\Sout$ and shares them with the adversary $\A$. Also $\Hildin$ denotes adversary's input subspace.
    \item [] \textbf{Learning.} For $i=1:k$
        \begin{itemize}
            \item $\A$ prepares two copies of a quantum state $\rho_i \in \Hildin$, appends one to $\Sin$ and sends the other to $\C$;
            \item $\C$ runs $\qPUFEval(\qPUFid,\rho_i)$ and sends $\rho_i^{out}$, to $\A$;
            \item $\A$ appends $\rho_i^{out}$ to $\Sout$.
        \end{itemize} 
    \item [] \textbf{Challenge.}\footnote{The parameter $c$ specifies the type of the challenge phase.}
    \begin{itemize}
        \item If $c=\qEx$: $\A$ picks a quantum state $\rho^*\in\Hildin$ at least $\mu$-distinguishable from all the states in $\Sin$ and sends $\kappa_1$ copies of it to $\C$;
        \item If $c=\qSel$: $\C$ chooses a quantum state $\rho^*$ at random from the uniform distribution over the Hilbert space $\Hildin$. The challenger keeps $\kappa_1$ copies of $\rho^*$ and sends an extra copy of $\rho^*$ to $\A$.
    \end{itemize}
    \item [] \textbf{Guess.} 
    \begin{itemize}
        \item $\A$ sends $\kappa_2$ copies of his guess $\rho'$ to~ $\C$;
        \item $\C$ runs $\qPUFEval(\qPUFid,\rho^*)^{\otimes\kappa_1}$, and gets $\rho_{out}^{*\otimes\kappa_1}$;
        \item $\C$ runs the test algorithm $b\leftarrow \mathcal{T}(\rho_{out}^{*\otimes \kappa_1},\rho'^{\otimes \kappa_2})$
        where $b\in\{0,1\}$ and outputs $b$. The adversary wins the game if $b=1$.\footnote{Note that all the learning phase queries and the challenges represented with $\rho, \rho', \ket{\phi}$, etc. are considered to be any general separable or entangled state of a $D$-dimensional Hilbert space. Moreover, $\kappa_1$ and $\kappa_2$ are a choice of notation that enables us to include any desired quantum test algorithm according to Definition~\ref{def:test} and are independent of the number of the copies that the adversary uses in the learning phase.}
    \end{itemize}
\end{itemize}
\end{game}

Based on the above game, we define the security notions, \emph{quantum exponential unforgeability}, \emph{quantum existential unforgeability} and \emph{quantum selective unforgeability} for qPUFs; where the first one, models unforgeability of qPUFs against exponential adversaries with unlimited access to the qPUF in the learning phase; the second one is the most common and strongest type of unforgeability against Quantum Polynomial-Time (QPT) adversaries; finally the third one is a weaker notion of unforgeability that is sufficient for most qPUF-based applications like qPUF-based identification protocols. 

\begin{definition}[Quantum Exponential Unforgeability]\label{def:QunconUf}
A $\qPUF$ provides quantum exponential unforgeability if the success probability of any \emph{exponential} adversary $\A$ in winning the game $\Gn{\qPUF}{\qEx}{\mu}(\A, \lambda)$ is negligible in $\lambda$
\begin{equation}
    Pr[1\leftarrow \Gn{\qPUF}{\qEx}{\mu}(\A, \lambda)] = \negl(\lambda)    
\end{equation}
\end{definition}

\begin{definition}[$\mu$-Quantum Existential Unforgeability]\label{def:Qunf}
A $\qPUF$ provides $\mu$-quantum existential unforgeability if the success probability of any Quantum Polynomial-Time (QPT) adversary $\A$ in winning the game $\Gn{\qPUF}{\qEx}{\mu}(\A, \lambda)$ is negligible in $\lambda$
\begin{equation}
Pr[1\leftarrow \Gn{\qPUF}{\qEx}{\mu}(\A, \lambda)] = \negl(\lambda)
\end{equation}
\end{definition}

\begin{definition}[Quantum Selective Unforgeability]\label{def:Qsun}
A $\qPUF$ provides quantum selective unforgeability if the success probability of any Quantum Polynomial-Time (QPT) $\A$ in winning the game $\Gnn{\qPUF}{\qSel}(\lambda, \A)$ is negligible in $\lambda$
\begin{equation}
Pr[1\leftarrow \Gnn{\qPUF}{\qSel}(\lambda, \A)] = \negl(\lambda)
\end{equation}
\end{definition}

\subsection{Security analysis of Unitary qPUFs}

Here, we show which security notions defined in Section 4.1 can be achieved by unitary qPUFs (UqPUFs) over a $D$-dimensional Hilbert space operating on pure quantum states. 

In the classical setting, cPUFs can be fully described by the finite set of CRPs, and this suffices for breaking unforgeability. More precisely, an unbounded or exponential adversary can extract the entire set of CRPs by querying the target cPUF with all possible challenges \cite{chang2017retrospective}. If the challenges are n-bit strings, the number of possible challenges is $2^n$. However, in the quantum setting, a UqPUF can generate an infinite number of quantum challenge-response pairs such that extracting all of them is hard, even for exponential adversaries. This, combined with limitations imposed by quantum mechanics such as no-cloning \cite{wootters1982single} and the limits on state estimation \cite{bruss1998optimal}, raise the question if UqPUFs could satisfy unforgeability against exponential adversaries. 
We now prove that no UqPUF provides quantum exponential unforgeability as defined in Definition~\ref{def:QunconUf}.

\begin{theorem}\label{th:no-qPUF-unbound}
\textbf{(No UqPUF provides quantum exponential unforgeability)} For any $\lambda$-UqPUF and any $0\leq\mu\leq1$, there exists an exponential quantum adversary $\A$ such that
\begin{equation}
Pr[1\leftarrow \Gn{\UqPUF}{\qEx}{\mu}(\lambda, \A)] = \nonnegl(\lambda)    
\end{equation}
\end{theorem}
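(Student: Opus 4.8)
The plan is to exhibit an explicit \emph{exponential} adversary $\A$ that spends the learning phase reconstructing the unknown unitary $\U$ of the target $\UqPUFid$ on all but one direction of $\HilD$, and then answers the challenge by a purely classical computation. This is legitimate: Definition~\ref{def:unk-uni} only forbids \emph{QPT} adversaries from predicting $\U$, so an exponential adversary may learn it outright. The one rule to respect is that the challenge state $\rho^*$ must remain $\mu$-distinguishable from every learning query, which for $\mu$ close to $1$ rules out querying a full basis; so I have $\A$ reconstruct $\U$ only on a codimension-one subspace. By unitarity this still pins down the action of $\U$ on the missing direction up to a global phase, which is irrelevant to the test algorithm.

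Concretely, fix an orthonormal family $\ket{e_1},\dots,\ket{e_D}$ of $\HilD$. In the learning phase, for each $i\le D-1$ the adversary uses $N=2^{\mathrm{poly}(\lambda)}$ iterations of the learning loop to query $\ket{e_i}$ (keeping one copy in $\Sin$ each time), thereby collecting $N$ copies of $\qPUFEval(\UqPUFid,\ket{e_i}\bra{e_i})=\U\ket{e_i}\bra{e_i}\U^{\dagger}$; the total number of iterations $k=N(D-1)$ is exponential in $\lambda$, which is allowed since an exponential adversary has unlimited learning access. Running standard pure-state tomography on each batch, $\A$ obtains unit vectors $\ket{\hat\phi_i}$ within trace distance $\epsilon=\negl(\lambda)$ of $\U\ket{e_i}$ (up to an irrelevant phase): accuracy $\epsilon$ with failure probability $2^{-\lambda}$ costs $\mathrm{poly}(D,1/\epsilon)$ copies, so $N$ exponential in $\lambda$ suffices, and the classical post-processing is $\mathrm{poly}(D)=2^{\mathrm{poly}(\lambda)}$. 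Since the true images $\U\ket{e_i}$ are orthonormal, for small $\epsilon$ the $\ket{\hat\phi_i}$ are linearly independent; let $\hat V=\mathrm{span}\{\ket{\hat\phi_1},\dots,\ket{\hat\phi_{D-1}}\}$ and let $\ket{\hat w}$ be a unit vector spanning the line $\hat V^{\perp}$.

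In the challenge phase with $c=\qEx$, for \emph{any} $\mu\in[0,1]$, $\A$ sends $\kappa_1$ copies of $\rho^*=\ket{e_D}\bra{e_D}$; since $F(\ket{e_D},\ket{e_i})=0\le 1-\mu$ for every $i\le D-1$, this $\rho^*$ is $\mu$-distinguishable from all of $\Sin$ and is therefore a legal challenge whatever the value of $\mu$. In the guess phase $\A$ sends $\kappa_2$ copies of $\ket{\hat w}\bra{\hat w}$. The challenger computes $\rho^*_{out}=\U\ket{e_D}\bra{e_D}\U^{\dagger}=\ket{w^*}\bra{w^*}$ with $\ket{w^*}=\U\ket{e_D}$; because $\U$ is unitary, $\ket{w^*}$ is orthogonal to every $\U\ket{e_i}$, i.e. $\ket{w^*}$ spans $(\U V)^{\perp}$ where $V=\mathrm{span}\{\ket{e_1},\dots,\ket{e_{D-1}}\}$. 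A subspace-perturbation estimate then gives $1-|\mbraket{\hat w}{w^*}|^2=\bra{w^*}P_{\hat V}\ket{w^*}-\bra{w^*}P_{\U V}\ket{w^*}\le\|P_{\hat V}-P_{\U V}\|=O(D\epsilon)$, hence $F(\ket{\hat w}\bra{\hat w},\rho^*_{out})\ge 1-\negl(\lambda)$ once $\epsilon$ is chosen a sufficiently small negligible function. By Definition~\ref{def:test} the test then accepts with probability $f(\kappa_1,\kappa_2,F(\ket{\hat w}\bra{\hat w},\rho^*_{out}))$, and since this fidelity is $1-\negl(\lambda)$ while $\lim_{F\to 1}f(\kappa_1,\kappa_2,F)=1$ for all $\kappa_1,\kappa_2$, the acceptance probability is non-negligible. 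Thus $Pr[1\leftarrow\Gn{\UqPUF}{\qEx}{\mu}(\lambda,\A)]=\nonnegl(\lambda)$ for every $\mu$.

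The main obstacle is precisely the interaction of the $\mu$-distinguishability constraint with the learning: the literal brute-force analogue --- query a full orthonormal basis and learn all of $\U$ --- \emph{breaks} at $\mu=1$, because then no state is $\mu$-distinguishable from the learned set (for $\mu<1$ it is fine, since a finite set never $\mu$-covers the continuum of pure states). The codimension-one reconstruction above resolves this uniformly in $\mu$; the only technical care needed is the subspace-perturbation bound controlling how the tomography error on the $D-1$ learned images propagates to the reconstructed orthogonal direction $\ket{\hat w}$, which is routine exactly because the true images $\U\ket{e_i}$ form a perfectly conditioned orthonormal set.
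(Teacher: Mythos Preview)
Your proof is correct and shares the paper's high-level idea---an exponential adversary learns the hidden unitary by tomography and then answers the challenge classically---but with a genuine refinement in how the learning is organised. The paper has the adversary query a \emph{full} orthonormal basis $\{\ket{b_i}\}_{i=1}^{D}$ of $\HilD$ (each vector $D$ times), run process tomography to recover the classical description of $\U_{\id}$, and then compute the correct response to any chosen challenge; it also invokes Solovay--Kitaev to cover the selective game. You instead query only $D-1$ basis vectors, do pure-state tomography on the images $\U\ket{e_i}$, and reconstruct $\U\ket{e_D}$ (up to a harmless global phase) as the orthogonal complement of the learned span.

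The payoff of your variant is precisely the $\mu$-distinguishability constraint. After querying a complete basis, any pure state $\ket{\psi}=\sum_i c_i\ket{b_i}$ satisfies $\max_i|c_i|^2\ge 1/D$, so no legal challenge exists once $\mu>1-1/D$---the paper's construction as written therefore does not cover the top of the stated range $0\le\mu\le 1$ (and your parenthetical ``for $\mu<1$ it is fine'' is slightly too generous: the obstruction kicks in already above $1-1/D$, not only at $\mu=1$). Your codimension-one reconstruction sidesteps this cleanly by leaving $\ket{e_D}$ unqueried and using it as the challenge, which is orthogonal---hence $\mu$-distinguishable for every $\mu\in[0,1]$---to everything in $\Sin$. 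The subspace-perturbation estimate you invoke, $1-|\mbraket{\hat w}{w^*}|^2=\bra{w^*}P_{\hat V}\ket{w^*}\le\|P_{\hat V}-P_{\U V}\|$, is standard and correct since the true images form a perfectly conditioned orthonormal frame. In short: same brute-force philosophy, but your version actually delivers the theorem for the full parameter range claimed.
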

\begin{proof}
The key idea of the proof is based on complexity analysis of unitary tomography and implementation of a general unitary by single and double qubit gates, since for an exponential quantum adversary, it will be feasible to extract the unitary matrix by tomography and then build the extracted unitary by general gate decomposition method. By using the Solovay-Kitaev theorem \cite{nielsen2010quantum}, we then show that the adversary can build the unitary matrix of the UqPUF performing on $n$-qubits, within an arbitrarily small distance $\epsilon$ using $O(n^2 4^n \log^c(n^2 4^n))$ gates and hence win the game with any test algorithm $\T$.
Let $\UqPUF_\id$ operate on $n$-qubit input-output pairs where $n=\log(D)$. In the learning phase, $\A$ selects a complete set of orthonormal basis of $\HilD$ denoted as $\{\ket{b_i}\}^{2^n}_{i=1}$ and queries $\UqPUF_\id$ with each base $2^n$ times. So, the total number of queries in the learning phase is $k_1=2^{2n}$.

Then, $\A$ runs a \emph{unitary tomography} algorithm to extract the mathematical description of the unknown unitary transformation corresponding to the $\UqPUF_\id$, say $\U_\id$. It has been shown in \cite{nielsen2010quantum} that the complexity of this algorithm is $\mathcal{O}(2^{2n})$ for n-qubit input-output pairs. This is feasible for an exponential adversary. It is clear that once the mathematical description of the unitary is extracted, $\A$ can simply calculate the response of the unitary to a known challenge quantum state and wins the game $\Gn{\UqPUF}{\qEx}{\mu}(\lambda, \A)$ for any value of $\mu$. So, we have:
\begin{equation}
    Pr[1\leftarrow \Gn{\UqPUF}{\qEx}{\mu}(\lambda, \A)] = 1.
\end{equation}
We can also show the exponential adversary wins even the weaker notion of the security, i.e. quantum selective unforgeability, where he has only one copy of the challenge quantum state. 
To win the game with the selective challenge phase, the adversary needs to implement the unitary. 

It is known that any unitary transformation over $\Hil^{2^n}$ requires $O(2^{2n})$ two-level unitary operations or $O(n^2 2^{2n})$ single qubit and CNOT gates \cite{nielsen2010quantum} to be implemented. However, according to Solovay-Kitaev theorem \cite{nielsen2010quantum}, to implement a unitary with an accuracy $\epsilon$ using any circuit consisting of $m$ single qubit and CNOT gates, $O(m \log^c(m/c))$ gates from the discrete set are required where $c$ is a constant approximately equal to 2. Thus, an arbitrary unitary performing on $n$-qubit can be approximately implemented within an arbitrarily small distance $\epsilon$ using $O(n^2 4^n \log^c(n^2 4^n))$ gates.

So, $\A$ implements the unitary $\U'_\id$ with error $\epsilon$. Let $\A$ get the challenge state $\ket{\psi}$ in the $\qSel$ Challenge phase. The adversary queries $\U'_\id$ with $\ket\psi$ and gets $\ket\omega=\U'_\id\ket\psi$ as output. Since the $\epsilon$ can be arbitrary small, then $F(\U_\id\ket\psi,\U'_\id\ket\psi)\ge 1-\negl(\lambda)$. So, $\A$'s output $\ket\omega$ passes any test algorithm $\T(\kpo^{\otimes \kappa_1}, \ket{\omega}^{\otimes \kappa_2})$ with probability close to 1. Again, an unbounded adversary wins the game $\Gn{\UqPUF}{\qSel}{\mu}(\lambda, \A)$ with probability 1.
\end{proof}

We note that this result is expected as any qPUF (same as a classical PUF), can in principle, be simulated with enough computational resources. That is why the reasonable and achievable security model is usually against a qPUF in hands of the adversary for a limited time or limited query such as QPT adversaries. It is also worth mentioning that from an engineering point of view, limiting the adversary to a certain number of queries on a hardware level, can depend on the construction and it might be possible in some qPUF implementations, while might not be feasible with some others. While this is an interesting problem to be considered in qPUF implementations, from a cryptanalysis point, our security analysis against a quantum adversary who is given polynomial time in the security parameter, is independent of the construction.

Exploiting the quantum emulation algorithm introduced in Section~\ref{sec:qe} we now turn to quantum existential unforgeability, and show that no UqPUF provides quantum existential unforgeability for any $\mu\neq 1$ as defined in Definition~\ref{def:Qunf}. Note that the case $\mu=1$ corresponds to the existential challenge state being orthogonal to all the queried states in the learning phase. With $\mu=1$, the adversary is prevented from taking advantage of its quantum access to the qPUF to win the game. 

\begin{theorem}\label{th:noeuf-qRMA}
\textbf{(No UqPUF provides quantum existential unforgeability)} For any $\lambda$-UqPUF, and $0\leq \mu \leq 1-\nonnegl(\lambda)$, there exits a QPT adversary $\A$ such that
\begin{equation}
    Pr[1\leftarrow \Gn{\UqPUF}{\qEx}{\mu}(\lambda, \A)] = \nonnegl(\lambda).    
\end{equation}
\end{theorem}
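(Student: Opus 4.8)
The plan is to exhibit a QPT adversary $\A$ winning $\Gn{\UqPUF}{\qEx}{\mu}(\lambda,\A)$ by running the quantum emulation algorithm of Section~\ref{sec:qe} as a subroutine. Since $\mu\le 1-\nonnegl(\lambda)$ means $1-\mu\ge 1/q(\lambda)$ for some polynomial $q$, put $d:=\lceil q(\lambda)\rceil$, which is polynomial in $\lambda$. In the \emph{learning phase} $\A$ fixes orthonormal states $\ket{b_1},\dots,\ket{b_d}\in\Hildin$ spanning a $d$-dimensional subspace $\Hild=\mathrm{span}(\Sin)$, and queries $\C$ on each $\ket{b_i}$, repeating each query polynomially many times so as to collect enough copies of the responses $\U_\id\ket{b_i}$ to realise the controlled-reflection gates of Stages~1 and~4 as in~\cite{marvian2016universal}; thus $\Sin=\{\ket{b_i}\}$ and $\Sout=\{\U_\id\ket{b_i}\}$. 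In the \emph{challenge phase} ($c=\qEx$) $\A$ chooses $\ket{\psi^*}=\frac{1}{\sqrt d}\sum_{i=1}^{d}\ket{b_i}$; since $F(\ket{\psi^*},\ket{b_i})=1/d\le 1-\mu$ for each $i$, this state is $\mu$-distinguishable from every state in $\Sin$, hence a legal challenge, and $\A$ sends $\kappa_1$ copies of it. Finally $\A$ runs the emulation circuit on $(\Sin,\Sout)$ with reference $\ket{\phi_r}=\ket{b_1}$ and input $\ket{\psi^*}$, $\kappa_2$ times independently, and submits the outputs as its guess $\rho'$.

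Next I would lower-bound the fidelity of $\rho'$ with the true response $\U_\id\ket{\psi^*}$. The reason for taking $\ket{\psi^*}\in\Hild$ is that, as recalled at the end of Section~\ref{sec:qe}, the emulation channel then has a fixed point inside the subspace, so with ideal gates Stage~1 succeeds with probability $P_{succ-stage1}$ close to $1$; by Theorem~\ref{th:qe-fidel}, $F\!\big(\rho_{QE},\,\U_\id\ket{\psi^*}\bra{\psi^*}\U_\id^{\dagger}\big)\ge\sqrt{P_{succ-stage1}}$, again close to $1$ and in particular non-negligible. This also follows from Theorem~\ref{th:qe-fins} directly: in $\ket{\chi_K}$ the ancilla register equals $\ket{0}^{\otimes K}$ only in the single term $\mbraket{\phi_r}{\psi^*}\ket{\phi_r}\ket{0}^{\otimes K}$, so tracing out the ancillas gives $\bra{\phi_r}Tr_{anc}(\ket{\chi_K}\bra{\chi_K})\ket{\phi_r}\ge|\mbraket{\phi_r}{\psi^*}|^{2}=1/d$, whence $P_{succ-stage1}\ge 1/d^{2}$ — already non-negligible and enough on its own.

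To conclude, each guess copy has fidelity $\ge\sqrt{P_{succ-stage1}}$ — non-negligibly bounded away from $0$ and in fact close to $1$ — with the correct response $\ket{\psi^*_{out}}=\U_\id\ket{\psi^*}$, so by Definition~\ref{def:test} (the limit $f(\kappa_1,\kappa_2,F)\to 1$ as $F\to 1$) the test $\T(\rho_{out}^{*\otimes\kappa_1},\rho'^{\otimes\kappa_2})$ accepts with non-negligible probability, and with a $\Td$ test of any fixed threshold $\delta<1$ it accepts with certainty. Hence $\mathrm{Pr}[1\leftarrow\Gn{\UqPUF}{\qEx}{\mu}(\lambda,\A)]=\nonnegl(\lambda)$, contradicting $\mu$-quantum existential unforgeability (Definition~\ref{def:Qunf}). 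The adversary is QPT: it makes $\mathrm{poly}(\lambda)$ queries, and the emulation circuit consists of $O(d)$ blocks of $O(1)$ gates each, the controlled-reflections $R_c(\phi)$ being realisable from polynomially many copies of $\ket\phi$ with $\mathrm{poly}(\log D)$ elementary gates~\cite{marvian2016universal,lloyd2014quantum}.

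I do not expect a serious obstacle in this "challenge-inside-the-learning-subspace" argument: once $\ket{\psi^*}\in\Hild$ the fidelity bound is immediate from Theorems~\ref{th:qe-fidel}--\ref{th:qe-fins}. The substantive work lies in the stronger, quantitative form of the statement in which the challenge is an \emph{arbitrary} $\mu$-distinguishable state rather than one engineered to lie entirely inside $\Hild$: there one needs the fidelity analysis of Theorem~\ref{th:qe-fins} for challenges not fully contained in the subspace, together with the optimisation over the reference state $\ket{\phi_r}$ flagged in Section~\ref{sec:qe}, in order to show that $P_{succ-stage1}$ stays non-negligible whenever the component of the challenge in $\Hild$ is non-negligible. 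This is the part I would expect to require real care.
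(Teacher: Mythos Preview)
Your approach is correct and shares the paper's core idea---engineer the challenge to lie inside the learned subspace and then run the emulator---but the concrete construction differs. The paper uses only \emph{two} learning queries: a state $\ket{\phi_1}$ and a tailored superposition $\ket{\phi_2}=\sqrt{\mu}\,\ket{\phi_1}+\sqrt{1-\mu}\,\ket{\phi_3}$ (or the balanced version for $\mu\le\tfrac12$), takes the orthogonal component $\ket{\phi_3}$ as the challenge, sets the reference to $\ket{\phi_r}=\ket{\phi_2}$, and computes $P_{succ-stage1}$ \emph{exactly} from Theorem~\ref{th:qe-fins}, obtaining fidelity $1$ for $\mu\le\tfrac12$ and $(1-\mu)(1+4\mu(1-\mu))$ otherwise. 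You instead take $d\approx 1/(1-\mu)$ orthonormal queries and the uniform superposition as challenge, and bound $P_{succ-stage1}\ge 1/d^{2}$ via the single $\ket{0}^{\otimes K}$ term in Theorem~\ref{th:qe-fins}. The paper's route is more economical (two queries regardless of $\mu$) and yields the sharper, exact fidelity; yours is perhaps more transparent but costs $\mathrm{poly}\big(1/(1-\mu)\big)$ queries and gives only an inverse-polynomial fidelity guarantee.

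Two minor points to tighten. First, when you conclude via Definition~\ref{def:test}, the limit you should invoke for the $1/d$ bound is $\lim_{\kappa_1,\kappa_2\to\infty}f(\kappa_1,\kappa_2,F)=F$, not the $F\to 1$ limit; your rigorously established fidelity is $\ge 1/d$, and the ``close to $1$'' claim via the fixed-point remark at the end of Section~\ref{sec:qe} is stated only informally there and is not what actually carries your argument. Second, take $d=\max\{2,\lceil q(\lambda)\rceil\}$ so that the challenge is genuinely distinct from the queries even when $\mu$ is small. Your closing observation is also on point: the theorem only asks for \emph{some} winning QPT adversary, so the harder analysis for challenges not engineered to lie in $\Hild$ is not needed here.
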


\begin{proof}
We show there is a QPT adversary $\A$ who wins the game $\Gn{\UqPUF}{\qEx}{\mu}(\lambda,\A)$ with non-negligible probability in $\lambda$.
The adversary $\A$ runs the learning phase of the game $\Gn{\UqPUF}{\qEx}{\mu}(\lambda, \A)$ with $\ket{\phi_1}$ and $\ket{\phi_2}$ such that $\ket{\phi_1}$ can be any quantum state in $\HilD$ and
\begin{widetext}
\begin{equation}
    \ket{\phi_2}=\begin{cases}
\frac{1}{\sqrt{2}}(\ket{\phi_1} + \ket{\phi_3}) & if~0 \leq \mu \leq \frac{1}{2}\\
\sqrt{\mu}\ket{\phi_1} + \sqrt{1-\mu}\ket{\phi_3} & if~ \frac{1}{2} < \mu \leq 1-\nonnegl(\lambda)
\end{cases}
\end{equation}
\end{widetext}
Without loss of the generality, we assume $\A$ chooses one of the computational basis of $\HilD$ as $\ket{\phi_1}$. Then, $\A$ chooses an orthogonal state to $\ket{\phi_1}$ as $\ket{\phi_3}$ and sets $\ket{\phi_2}$ the superposition of these two states.
In the existential challenge phase, $\A$ sets $\ket{\phi_3}$ as his chosen challenge. Note that $\ket{\phi_3}$ satisfies the $\mu$-distinguishability of the challenge state with both $\ket{\phi_1}$ and $\ket{\phi_2}$. In the guess phase, to estimate the output of UqPUF to $\ket{\phi_3}$, the adversary $\A$ runs the quantum emulation (QE) algorithm defined in Section \ref{sec:qe} with the reference state $\ket{\phi_r}=\ket{\phi_2}$. 

Relying on Theorem~\ref{th:qe-fins}, the output state of Stage 1 of the QE algorithm is:
\begin{equation}
\begin{split}
    \ket{\chi_f} &= \mbraket{\phi_2}{\phi_3} \ket{\phi_2}\ket{0} + \ket{\phi_3}\ket{1} - \mbraket{\phi_2}{\phi_3}\ket{\phi_2}\ket{1}\\
    & -2\mbraket{\phi_1}{\phi_3}\ket{\phi_1}\ket{1} +2\mbraket{\phi_2}{\phi_3}\mbraket{\phi_2}{\phi_1}\ket{\phi_1}\ket{1}.
\end{split}
\end{equation}
Note that $\bra{\phi_1}\phi_3\rangle = 0$ and we set $\bra{\phi_2}\phi_3\rangle = \alpha$ and $\bra{\phi_2}\phi_1\rangle = \beta$ based on the choice of $\ket{\phi_2}$, the above equation can be simplified as:
\begin{equation}
    \ket{\chi_f} = \alpha \ket{\phi_2}\ket{0} + \ket{\phi_3}\ket{1} - \alpha \ket{\phi_2}\ket{1} +2\alpha\beta \ket{\phi_1}\ket{1}.
\end{equation}
Now, according to Theorem~\ref{th:qe-fidel}, the final fidelity in terms of the success probability of Stage 1 can be obtained by calculating the density matrix of $\ket{\chi_f}$ and tracing out the ancillas:
\begin{equation}
\begin{split}
        P_{succ-stage1} & = |\bra{\phi_2}Tr_{anc}(\ket{\chi_f}\bra{\chi_f})\ket{\phi_2}|^2 \\
        & = |\alpha^2(1+4\alpha^2\beta^2)|^2.
\end{split}
\end{equation}
We have different choices for the reference state depending on the distinguishability parameter $\mu$. For cases where the adversary is allowed to produce a new state with at least overlap half with all the states in the learning phase, by choosing the uniform superposition of the states where $\alpha = \beta = \frac{1}{\sqrt{2}}$, the output fidelity will be:
\begin{equation}
    F(\ket{\phi_3^{out'}}\bra{\phi_3^{out'}}, \ket{\phi_3^{out}}\bra{\phi_3^{out}}) \geq \sqrt{P_{succ-stage1}} = 1.
\end{equation}
where $\ket{\phi_3^{out'}}$ and $\ket{\phi_3^{out}}$ are the output of the QE algorithm and UqPUF to $\ket{\phi_3}$, respectively. 

As can be seen, these two states are completely indistinguishable So, the success probability of $\A$ for any test according to Definition~\ref{def:test} is:
\begin{equation}
        Pr[1\leftarrow \Gn{\UqPUF}{\qEx}{\mu}(\lambda, \A)] = Pr[1\leftarrow\T(\ket{\psi^{out}}^{\otimes \kappa_1}, \ket{\omega}^{\otimes \kappa_2})] = 1
\end{equation}
which is the optimal choice of the reference. On the other hand, for the cases where the adversary is restricted to produce a challenge more than half distinguishable, we can still create a superposed state with $\alpha = \sqrt{1-\mu}$ and $\beta = \sqrt{\mu}$ and end up with the following fidelity of the emulation by setting $\mu=1-\nonnegl(\lambda)$
\begin{equation}
\begin{split}
    F(\ket{\phi_3^{out'}}\bra{\phi_3^{out'}}, \ket{\phi_3^{out}}\bra{\phi_3^{out}}) & \geq |\alpha^2(1+4\alpha^2\beta^2)| \\ &= |(1-\mu)(1 + 4\mu(1-\mu))| \\&= \nonnegl(\lambda).
\end{split}
\end{equation}
Recall that the security parameter $\lambda$ includes the number of copies used in the test algorithm ($\kappa_1$, $\kappa_2$), by increasing them the probability of accepting will converge to the above fidelity thus for any $\frac{1}{2} < \mu \leq 1 -\nonnegl(\lambda)$:
\begin{widetext}
\begin{equation}
        Pr[1\leftarrow \Gn{\UqPUF}{\qEx}{\mu}(\lambda, \A)] = Pr[1\leftarrow\T(\ket{\phi_3^{out}}^{\otimes \kappa_1}, \ket{\phi_3^{out'}}^{\otimes \kappa_2})] = \nonnegl(\lambda)
\end{equation}
\end{widetext}
And the proof is complete.
\end{proof}

This theorem implies that the adversary can always generate the correct response to his chosen challenge provided that he can query it in superposition with other quantum states during the learning phase in terms of the parameter $\mu$. Note that since output quantum states in the learning phase are unknown to the adversary, the more straightforward strategy of superposing the learnt output quantum states cannot be efficiently performed. More precisely, the adversary cannot prepare the precise target superposition of the output states that are completely unknown~\cite{oszmaniec2016creating,doosti2017universal}. Hence the proposed attack is general but non-trivial.

We now further relax the level of security and consider quantum selective unforgeability. We show that any UqPUF can provide this weaker notion of security. Note that in most of the PUF-based applications such as PUF-based identification protocols, selective unforgeability is sufficient. 

We need the following lemma to prove the quantum selective unforgeability feature of UqPUFs. The lemma implies the average probability of any state in $\HilD$ to be projected in a subspace $\Hild$ where $d\leq D$. Based on this lemma, we calculate the probability of a state chosen uniformly at random from $\HilD$ to be projected in the orthogonal subspace of the adversary's database where the quantum emulation or similar attacks does not work. 

\begin{lemma}\label{lemma2}
    Let $\HilD$ be a $D$-dimensional Hilbert space and $\Hild$ a subspace of $\HilD$ with dimension $d$. Also, let $\Pi_d$ be a projector for any quantum state in $\HilD$ into $\Hild$. The average probability that any state, chosen uniformly at random from $\HilD$, $\ket\psi\underset{R}{\in}\HilD$ to be projected into $\Hild$ is equal to $\frac{d}{D}$
    \begin{equation}
        \underset{{\ket{\psi},\Pi_d}}{Pr}[|\bra{\psi}\Pi_d\ket{\psi}| = 1] = \frac{d}{D}
    \end{equation}
\end{lemma}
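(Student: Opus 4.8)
The plan is to reduce the claim to the single standard fact that the uniform (Haar) average of a random pure-state projector on a $D$-dimensional space is the maximally mixed state, i.e.\ $\mathbb{E}_{\ket{\psi}}[\ket{\psi}\bra{\psi}] = \tfrac{1}{D}\,\mathbb{I}$. I would justify this identity by a symmetry argument: the operator $M := \mathbb{E}_{\ket{\psi}}[\ket{\psi}\bra{\psi}]$ satisfies $U M U^{\dagger} = M$ for every unitary $U$, since the uniform distribution over pure states of $\HilD$ is invariant under unitaries; hence $M$ commutes with the whole unitary group and, by Schur's lemma, must be a scalar multiple of the identity. Because $\mathrm{Tr}(M) = \mathbb{E}_{\ket{\psi}}[\mathrm{Tr}(\ket{\psi}\bra{\psi})] = 1$, the scalar is $1/D$, so $M = \mathbb{I}/D$.

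Next I would rewrite the quantity of interest as an expectation of a trace, $\bra{\psi}\Pi_d\ket{\psi} = \mathrm{Tr}(\Pi_d\,\ket{\psi}\bra{\psi})$, which is exactly the probability that a projective measurement $\{\Pi_d,\mathbb{I}-\Pi_d\}$ applied to $\ket{\psi}$ lands in the subspace $\Hild$. Then by linearity of the trace and of the expectation,
\[
\mathbb{E}_{\ket{\psi}}\big[\bra{\psi}\Pi_d\ket{\psi}\big] = \mathrm{Tr}\!\big(\Pi_d\,\mathbb{E}_{\ket{\psi}}[\ket{\psi}\bra{\psi}]\big) = \frac{1}{D}\,\mathrm{Tr}(\Pi_d) = \frac{d}{D},
\]
where $\mathrm{Tr}(\Pi_d) = \dim(\Hild) = d$. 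This finishes the argument.

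As an alternative route that avoids invoking Schur's lemma, I would instead fix an orthonormal basis $\{\ket{b_i}\}_{i=1}^{D}$ of $\HilD$ whose first $d$ vectors span $\Hild$, expand $\ket{\psi} = \sum_{i=1}^{D} c_i \ket{b_i}$, and note that $\bra{\psi}\Pi_d\ket{\psi} = \sum_{i=1}^{d} |c_i|^2$. By invariance of the uniform distribution under permutations of the basis vectors, the random variables $|c_1|^2,\dots,|c_D|^2$ are identically distributed; since they sum to $1$ almost surely, each has expectation $1/D$, and summing the first $d$ of them yields $d/D$.

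I do not expect a genuine obstacle here; the one point that needs care is the reading of the statement. The left-hand side, written as $\underset{\ket{\psi},\Pi_d}{Pr}[\,|\bra{\psi}\Pi_d\ket{\psi}| = 1\,]$, should be interpreted as the averaged projection weight $\mathbb{E}_{\ket{\psi}}[\bra{\psi}\Pi_d\ket{\psi}]$ — equivalently, the probability (averaged over the uniformly random choice of $\ket{\psi}$) that a projective measurement onto $\Hild$ succeeds — rather than a literal probability that the overlap equals exactly $1$. Under that interpretation, the only substantive ingredient is the identity $\mathbb{E}_{\ket{\psi}}[\ket{\psi}\bra{\psi}] = \mathbb{I}/D$, which is precisely where the ``uniformly at random'' hypothesis on $\ket{\psi}$ enters.
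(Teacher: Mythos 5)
Your proof is correct, and your reading of the left-hand side as the averaged projection weight $\mathbb{E}_{\ket{\psi}}[\bra{\psi}\Pi_d\ket{\psi}]$ is exactly how the lemma is used later in the paper (e.g.\ to conclude $Pr[\ket{\psi}\in\Hildperp]=\frac{D-d}{D}$). Your main route, however, is genuinely different from the paper's. The paper argues by an elementary symmetry/counting device: it assumes without loss of generality that $D=md$, partitions $\HilD$ into $m$ symmetric $d$-dimensional subspaces, and argues that a uniformly random state is equally likely to ``fall into'' each, giving $\frac{1}{m}=\frac{d}{D}$; it then supplements this with a completeness argument ($\sum_i \Pi_i = \mathbb{I}$, so $\sum_i \bra{\psi}\Pi_i\ket{\psi}=1$) and an averaging claim about the coefficients $|\alpha_i|^2$. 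Your primary argument instead reduces everything to the single identity $\mathbb{E}_{\ket{\psi}}[\ket{\psi}\bra{\psi}]=\mathbb{I}/D$ (via unitary invariance and Schur's lemma) followed by linearity of the trace, which buys both rigor and generality: it needs no divisibility assumption $d\mid D$, no choice of a particular partition into subspaces, and it makes explicit exactly where the uniform-randomness hypothesis enters. Your second, basis-expansion route (permutation symmetry of the $|c_i|^2$, each with expectation $1/D$) is essentially a cleaned-up version of the paper's coefficient-averaging argument, stated more carefully since it identifies the expectation of each $|c_i|^2$ as $1/D$ rather than reasoning loosely about ``permutations of $d$ of the coefficients.'' Either of your arguments suffices, and the first is the tidier replacement for the paper's proof.
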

\begin{proof}
    The proof is mainly based on the symmetry of the Hilbert space and the fact that the probability of falling into each subspace is equal for any state uniformly picked at random.
    
    Note that Any state $\ket{\psi} \in \HilD$ can be written in terms of the orthonormal bases of $\HilD$ denoted by $\ket{b_i}$, as follows:
\begin{equation}
    \ket{\psi} = \sum_{i=0}^{D-1} \alpha_i\ket{b_i} \quad \text{with} \quad \sum_{i=0}^{D-1} |\alpha_i|^2 = 1
\end{equation}
where $\alpha_i$ are complex coefficients. A projection into a smaller subspace consists of choosing $d$ bases of $\HilD$ in the form of $\sum_{j=0}^{d-1} \ket{b_j}\bra{b_j} $. Without loss of generality, we can assume $D = md$ where $m$ is an integer. This assumption is always correct for qubit spaces. This means that the larger Hilbert space can be divided into $m$ smaller subspaces each with dimension $d$. Let $\{\ket{e_i}\}_{i=0}^{d-1}$ be a subset of $\HilD$ which makes a complete set of bases for one of the $d$-dimensional subspaces. A projector projects $\ket{\psi}$ into one of the subspaces. As $\ket{\psi}$ has been picked at random and the subspaces are symmetric, the probability of falling into each subspace is the same and equal to $\frac{1}{m}$ which is $\frac{d}{D}$. Otherwise either the sum of all probabilities would not be 1 or the $\ket\psi$ has not been picked uniformly at random from $\HilD$. This shows that on average the probability of projecting a state $\psi$ is $\frac{d}{D}$. This can also be seen by the fact that the sum of all projectors in a complete set of projectors is equal to one. In this case, we have
\begin{equation}
    \sum_{i=0}^{D-1} \Pi_i = \mathbb{I}
\end{equation}
By sandwiching $\ket{\psi}$ on both sides we have:
\begin{equation}
    \sum_{i=0}^{D-1} \bra{\psi}\Pi_i\ket{\psi} = 1.
\end{equation}
Each $\bra{\psi}\Pi_i\ket{\psi}$ is itself equal to $\sum_{j=0}^{d-1} |\mbraket{\psi}{d_{ij}}|^2$ where $\ket{d_{ij}}$s are the bases associated to the subspace that the projector $\Pi_i$ projects into. This corresponds to all the permutations of $d$ number of the coefficient $|\alpha_i|^2$ which will be $\frac{1}{d}$ on average. Since we have $\sum_{i=0}^{D-1} \frac{Pr_{\Pi_i}}{d} = 1$, we can conclude that the average probability $Pr_{\Pi}$ for all the projectors will be $\frac{d}{D}$ and the proof is complete.
\end{proof}

To establish our possibility result, we first present a preliminary theorem which demonstrates the security of the UqPUF considering an ideal test algorithm which asymptotically satisfies the notion of distance as defined in Definition~\ref{def:test-delta}.

\begin{theorem}\label{th:sel-qCM-fid}
For any unitary qPUF characterised by $\UqPUF=(\qPUFGen,\qPUFEval,\Td)$, and any non-zero $\delta$, the success probability of any QPT adversary $\A$ in the game $\Gnn{\UqPUF}{\qSel}(\lambda, \A)$ is bounded as follows:
\begin{equation}
    Pr[1\leftarrow \Gnn{\UqPUF}{\qSel}(\lambda, \A)] \leq \frac{d+1}{D}
\end{equation}
where $D$ is the dimension of the Hilbert space that the challenge quantum state is picked from, and $0\leq d \leq D-1$ is the dimension of the largest subspace of $\HilD$ that the adversary can span in the learning phase of $\Gnn{\UqPUF}{\qSel}(\lambda, \A)$.
\end{theorem}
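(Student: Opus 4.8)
The plan is to exploit the rigid structure of the ideal test $\Td$ together with Lemma~\ref{lemma2}. With $\Td$, the adversary wins the game $\Gnn{\UqPUF}{\qSel}(\lambda,\A)$ exactly when its guess $\rho'$ satisfies $F(\U_\id\ket\psi,\rho')\geq\delta$, where $\ket\psi$ is the challenge drawn uniformly at random from $\HilD$ and $\U_\id$ is the unknown unitary of the UqPUF. So it suffices to upper bound, over the random challenge and the adversary's operations, the probability that $\A$ outputs a state with non-negligible overlap with $\U_\id\ket\psi$, and to show this probability is at most $\tfrac{d+1}{D}$.

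First I would fix the learning transcript and let $\Hild=\mathrm{span}$ of the adversary's $k$ learning queries, so $\dim\Hild\leq d$; taking $\dim\Hild=d$ can only help $\A$. Let $\Pi_d$ be the projector onto $\Hild$ and write the random challenge as $\ket\psi=\sqrt{p}\,\ket{\psi_\parallel}+\sqrt{1-p}\,\ket{\psi_\perp}$ with $\ket{\psi_\parallel}\in\Hild$, $\ket{\psi_\perp}\in\Hildperp$ and $p=\bra\psi\Pi_d\ket\psi$. Since $\U_\id$ is unitary, $\U_\id\ket\psi=\sqrt{p}\,\U_\id\ket{\psi_\parallel}+\sqrt{1-p}\,\U_\id\ket{\psi_\perp}$ is an orthogonal decomposition with $\U_\id\ket{\psi_\parallel}$ in the $d$-dimensional space $\U_\id\Hild$ and $\U_\id\ket{\psi_\perp}$ in the orthogonal $(D-d)$-dimensional space $\U_\id\Hildperp$.

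The key step is to argue that $\A$ has no usable information about the component $\U_\id\ket{\psi_\perp}$. The learning phase only reveals the action of $\U_\id$ on $\Hild$; by the unknown-unitary property (Definition~\ref{def:unk-uni}) the restriction of $\U_\id$ to $\Hildperp$ is indistinguishable from a Haar-random partial isometry into $\U_\id\Hildperp$, independent of $\U_\id|_{\Hild}$. The single copy of $\ket\psi$ handed to $\A$ in the selective challenge phase carries information about the \emph{direction} $\ket{\psi_\perp}$, but nothing about how $\U_\id$ acts there: in particular, feeding this copy together with the learned samples into the quantum emulation algorithm returns it essentially unchanged, since by Theorem~\ref{th:qe-fins} any component orthogonal to $\Hild$ survives Stage~1 as $\ket\psi\ket{1}^{\otimes K}$ with $P_{succ-stage1}=0$ (and then Theorem~\ref{th:qe-fidel} gives no guarantee). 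Hence, from $\A$'s viewpoint $\U_\id\ket{\psi_\perp}$ is uniformly distributed over the $(D-d)$-dimensional space $\U_\id\Hildperp$, so any output $\rho'$ of $\A$ obeys $\mathbb{E}\big[\bra{\psi_\perp}\U_\id^{\dagger}\rho'\,\U_\id\ket{\psi_\perp}\big]\leq \tfrac{1}{D-d}$, which is below any fixed threshold $\delta$ for $D$ large; the cross terms between the two components are bounded by Cauchy--Schwarz in the same way. Consequently $\A$ can pass the test only by relying on the $\U_\id\ket{\psi_\parallel}$ part, i.e.\ effectively only when the challenge lies inside the span of $\Hild$ together with the one copy it received.

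Finally I would invoke Lemma~\ref{lemma2}: a uniformly random $\ket\psi\in\HilD$ has expected weight $\mathbb{E}[\bra\psi\Pi_d\ket\psi]=\tfrac{d}{D}$ inside $\Hild$, and accounting for the single extra challenge copy — which $\A$ can at best use to enlarge the faithfully-handled subspace by one dimension, from $\Hild$ to $\mathrm{span}(\Hild,\ket\psi)$ of dimension $d+1$ — the winning probability is bounded by $\tfrac{d+1}{D}$. The main obstacle is making the ``no usable information on $\Hildperp$, and one copy buys at most one dimension'' argument fully rigorous against \emph{arbitrary} adaptive and entangled QPT strategies: one must rule out that $\A$ amplifies or recycles its single challenge copy, or correlates it with cleverly chosen (possibly entangled) learning queries, so as to exceed the bound. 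This is exactly where Definition~\ref{def:unk-uni}, no-cloning of unknown unitaries, and the precise form of Theorem~\ref{th:qe-fins} are needed, and it is the part I expect to require the most care.
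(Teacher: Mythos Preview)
Your high-level plan---split the challenge into a component inside the learned subspace $\Hild$ and a component in $\Hildperp$, invoke Lemma~\ref{lemma2}, and argue that the adversary has no handle on the orthogonal part---is essentially the paper's plan. But two of your key steps are not right and diverge from how the paper actually closes the argument.

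First, your accounting for the ``$+1$'' is the wrong mechanism. You attribute it to the single challenge copy, claiming $\A$ can ``enlarge the faithfully-handled subspace by one dimension, from $\Hild$ to $\mathrm{span}(\Hild,\ket\psi)$.'' That is not how the bound arises: owning a copy of $\ket\psi$ tells $\A$ nothing about $\U_\id\ket\psi$ on $\Hildperp$, so it does not enlarge any subspace on which $\A$ can faithfully reproduce $\U_\id$. In the paper the $+1$ comes from the \emph{orthogonal} case itself: when $\ket\psi\in\Hildperp$ (which by Lemma~\ref{lemma2} has weight $\tfrac{D-d}{D}$), the best $\A$ can do is output a state in $\U_\id\Hildperp$ essentially uniformly, giving $\underset{\ket\psi\in\Hildperp}{Pr}[F\geq\delta]\leq\tfrac{1}{D-d}$; combining $\tfrac{1}{D-d}\cdot\tfrac{D-d}{D}+\tfrac{d}{D}$ yields $\tfrac{d+1}{D}$. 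Your ``expected weight $d/D$ plus one extra dimension'' heuristic does not reproduce this, and conflates the expected projection weight $\mathbb{E}[\bra\psi\Pi_d\ket\psi]$ with a winning probability.

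Second, your appeal to Theorems~\ref{th:qe-fidel} and~\ref{th:qe-fins} does not do the work you want. Those results describe one particular algorithm (the universal quantum emulator) and show \emph{it} fails on $\Hildperp$; they are not an optimality statement and cannot rule out other QPT strategies. The paper does not use them here at all. Instead it passes to a strictly stronger adversary $\A'$ who is \emph{given the full classical description} of the action of $\U_\id$ on $\Hild$ (so entangled and adaptive learning queries are automatically covered), and then argues information-theoretically that, conditioned on $\ket\psi\in\Hildperp$, $\A'$'s guess must be uniform over $\U_\id\Hildperp$. That reduction is what makes the ``no usable information on $\Hildperp$'' step rigorous---your sketch acknowledges this is the hard part, and invoking the emulator theorems is not the way to discharge it.
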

\begin{proofsketch}
The complete proof can be found in Appendix~\ref{ap:proof-sel-fid}, here we only sketch the main idea.
We are interested in the average success probability of the adversary running the game $\Gnn{\UqPUF}{\qSel}(\lambda, \A)$. Let the subspace spanned by the learnt queries be a $d$-dimensional subspace of $\HilD$ denoted by $\Hild$. We calculate the average fidelity of the adversary's estimated output state $\ket\omega$ and the correct output $\kpo$, over all choices of the $\qSel$ challenge state $\ket{\psi}$. We require this fidelity to be greater than a value $\delta$ imposed by the $\Td$: 
\begin{equation}
    Pr[1\leftarrow \Gnn{\UqPUF}{\qSel}(\lambda, \A)] = \underset{\ket{\psi}\in\HilD}{Pr}[F(\ket\omega,\kpo) \geq \delta].    
\end{equation}
Note that because of the quantum nature of queries in the learning phase and the limited number of queries that the QPT adversary $\A$ can make, $\A$ might not have the classical description of the responses to his queries. So, we let $\A'$ be another QPT adversary who has full knowledge of $\Hild$. It is obvious that the success probability of $\A'$ would be higher than the success probability of $\A$ due to the extra knowledge that $A'$ has. So, we have
\begin{equation}
    Pr[1\leftarrow \Gnn{\UqPUF}{\qSel}(\lambda, \A)]\leq Pr[1\leftarrow \Gnn{\UqPUF}{\qSel}(\lambda, \A')]    
\end{equation}
In rest of the proof, We calculate the success probability of $\A'$ which is the higher bound for the success probability of $\A$. We write this probability in terms of its partial probabilities for the states orthogonal to $\Hild$ and the rest of the space:
\begin{widetext}
\begin{equation}
        Pr[1\leftarrow \Gnn{\UqPUF}{\qSel}(\lambda, \A')] =  \underset{\ket{\psi}\in\Hildperp}{Pr}[F \geq \delta]Pr[\ket{\psi} \in \Hildperp] + \underset{\ket{\psi}\not\in\Hildperp}{Pr}[F \geq \delta]Pr[\ket{\psi} \not\in \Hildperp].
\end{equation}
\end{widetext}
The probability of projection into the orthogonal subspace and the conjugate subspace can be obtained by calling Lemma~\ref{lemma2}:
\begin{equation}
    Pr[\ket{\psi} \in \Hildperp] = \frac{d^{\perp}}{D}
\end{equation}
where $d^{\perp}=D-d$; And
\begin{equation}
Pr[\ket{\psi} \not\in \Hildperp] = 1 - Pr[\ket{\psi} \in \Hildperp] = \frac{d}{D}
\end{equation}

We also assume there exists a QPT algorithm that its average probability over all the states not in the orthogonal subspace to estimate their outputs with $F\geq \delta$ is 1, i.e. $\underset{\ket{\psi}\not\in\Hildperp}{Pr}[F \geq \delta]=1$.

Thus, the only remaining term to calculate is the probability that the average fidelity be greater than $\delta$ in the orthogonal subspace, i.e. $\underset{\ket{\psi}\in\Hildperp}{Pr}[F \geq \delta]$. We show in Appendix~\ref{ap:proof-sel-fid} that since the $\qSel$ challenge is chosen uniformly at random from $\HilD$, the best attack strategy to achieve the desired fidelity is choosing the output state uniformly at random from $\HilD$. 

Then, we calculate the average fidelity according to Haar measure and show the average probability for non-zero fidelity is bounded by:
\begin{equation}
 \underset{\kpo\in\Hildperpo}{Pr}[F \neq 0] \leq \frac{1}{D-d}
\end{equation}
So, for non-zero $\delta$ we also have,
\begin{equation}
\underset{\kpo\in\Hildperpo}{Pr}[F \geq \delta] \leq \frac{1}{D-d}
\end{equation}
As a result, the success probability of $\A$ is bounded by 
\begin{equation}
      Pr[1\leftarrow \Gnn{\UqPUF}{\qSel}(\lambda, \A)] \leq Pr[1\leftarrow \Gnn{\UqPUF}{\qSel}(\lambda, \A')] \leq \frac{d+1}{D}
\end{equation}
And the theorem is proved.
\end{proofsketch}

\begin{theorem}\label{th:sel-qCM}
\textbf{(Any UqPUF  provides quantum selective unforgeability)} Let the test algorithm $\T$ be defined according to Definition~\ref{def:test} and satisfy the condition $Err(\kappa_1, \kappa_2) = \negl(\kappa_1, \kappa_2)$. Then, for any $\UqPUF=(\qPUFGen,\qPUFEval, \T)$ and any QPT adversary, we have:
\begin{equation}
    Pr[1\leftarrow \Gnn{\UqPUF}{\qSel}(\lambda, \A)] = \negl(\lambda).    
\end{equation}
\end{theorem}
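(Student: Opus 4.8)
The plan is to reduce security against a general test algorithm $\T$ to the security against the ideal test $\Td$ already established in Theorem~\ref{th:sel-qCM-fid}. The only difference between the two games is the final step: with $\Td$ the challenger accepts exactly when $F(\ket\omega,\kpo)\ge\delta$, whereas with $\T$ it accepts with probability $f(\kappa_1,\kappa_2,F(\ket\omega,\kpo))$. Since the test is applied only after the adversary has produced its guess, a QPT adversary $\A$ against the $\T$-game performs exactly the same learning/challenge/guess steps it would in the $\Td$-game, and hence induces the same distribution over the pair $(\kpo,\ket\omega)$, equivalently over the fidelity $F:=F(\ket\omega,\kpo)$. So I would first write
\begin{equation*}
Pr[1\leftarrow \Gnn{\UqPUF}{\qSel}(\lambda, \A)] = \mathbb{E}\big[f(\kappa_1,\kappa_2,F)\big],
\end{equation*}
the expectation being over the uniformly random $\qSel$-challenge $\ket\psi$ and the adversary's internal randomness.

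Next I would exploit that Theorem~\ref{th:sel-qCM-fid} holds for \emph{every} non-zero $\delta$: applied to the fidelity distribution induced by $\A$ it gives $Pr[F\ge\delta]\le\frac{d+1}{D}$ for all $\delta>0$, and since the events $\{F\ge\delta\}$ increase to $\{F>0\}$ as $\delta\to 0^+$, continuity of probability yields $Pr[F>0]\le\frac{d+1}{D}$. I would then split the expectation according to whether $F=0$ or $F>0$, bounding $f\le 1$ on the second event and using the third limiting case of Definition~\ref{def:test}, namely that the test's acceptance probability at $F=0$ is $Err(\kappa_1,\kappa_2)$, on the first:
\begin{equation*}
\mathbb{E}[f] \le Err(\kappa_1,\kappa_2)\cdot Pr[F=0] + 1\cdot Pr[F>0] \le Err(\kappa_1,\kappa_2) + \frac{d+1}{D}.
\end{equation*}

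To finish I would argue that both terms are negligible in $\lambda$. Because $\A$ is QPT it makes only $k=\mathrm{poly}(\lambda)$ learning queries, so the subspace its database spans has dimension $d\le k=\mathrm{poly}(\lambda)$, while $D$ is exponential in the security parameter; hence $\frac{d+1}{D}=\negl(\lambda)$. By the hypothesis of the theorem $Err(\kappa_1,\kappa_2)=\negl(\kappa_1,\kappa_2)$, and since $\kappa_1,\kappa_2$ are part of the security parameter this is $\negl(\lambda)$. A sum of two negligible functions is negligible, giving the claim.

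I expect the main obstacle to be the clean handling of the $F=0$ event, i.e. making rigorous that Theorem~\ref{th:sel-qCM-fid}'s guarantee "$\le\frac{d+1}{D}$ for every $\delta>0$'' genuinely bounds $Pr[F>0]$ for the concrete fidelity distribution of an arbitrary $\T$-adversary, and that the value of $f$ at $F=0$ may be taken equal to $Err(\kappa_1,\kappa_2)$ rather than merely its limiting value. If one prefers not to evaluate $f$ exactly at $F=0$, the same argument goes through by splitting at a threshold $\delta_\lambda$ that vanishes with $\lambda$ and invoking the third limit of Definition~\ref{def:test}; either way the remaining work — pinning down $d$ as polynomially bounded and assembling the negligibility estimate — is routine.
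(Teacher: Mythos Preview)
Your proposal is correct and follows essentially the same approach as the paper: both split the acceptance probability according to a fidelity threshold, invoke Theorem~\ref{th:sel-qCM-fid} to bound the high-fidelity event by $\frac{d+1}{D}$, use the test's error term $Err(\kappa_1,\kappa_2)$ on the low-fidelity event, and conclude negligibility from $d=\mathrm{poly}$ and $D$ exponential. The only cosmetic difference is that the paper sets the threshold directly at $\delta=\negl(\lambda)$ (your stated alternative), whereas you first take $\delta\to 0^+$ via continuity of measure; either route works, and your bound $f\le 1$ on the high-fidelity event is in fact cleaner than the paper's corresponding step.
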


\begin{proof}
Let $\ket\psi$ be quantum state chosen by the challenger in the selective challenge phase. Also, let $\kpo$ and $\ket\omega$ be the output of the UqPUF and the adversary $\A$ to $\ket\psi$, respectively. Note that the success probability of $\A$ in game $\Gnn{\UqPUF}{qSel}(\lambda,\A)$ is equal to the probability of the test algorithm in outputting 1:
\begin{equation}
    Pr[1\leftarrow \Gnn{\UqPUF}{\qSel}(\lambda, \A)] = Pr[1\leftarrow \T(\kpo^{\otimes \kappa_1}, \ket{\omega}^{\otimes \kappa_2})]    
\end{equation}
We denote $Pr[1 \leftarrow \T(\ket\omega^{\otimes \kappa_1}, \kpo^{\otimes \kappa_2})]$ with $Pr[1 \leftarrow \T]$ for simplicity. To calculate this probability, we consider two independent cases where leads the $\T$ outputs 1. We let $\delta$ be the threshold for $F(\ket\omega,\kpo)$ that helps us to write the $Pr[1 \leftarrow \T]$ as sum of two terms, i.e. the probability of $\T$ outputting 1 while $F\ge \delta$ and the probability of $\T$ outputting 1 while $F < \delta$:
\begin{equation}
Pr[1 \leftarrow \T] = Pr[1 \leftarrow \T , F\geq \delta] + Pr[1 \leftarrow \T , F < \delta]
\end{equation}
Let $\delta = \negl(\lambda)$ hence we have
\begin{equation}
\begin{split}
    Pr[1 \leftarrow \T] & = Pr[1 \leftarrow \T | F \geq \negl(\lambda)] Pr[F \geq \negl(\lambda)] \\ & + Pr[1 \leftarrow \T | F < \negl(\lambda)] Pr[F < \negl(\lambda)]
\end{split}
\end{equation}
and then from Theorem \ref{th:sel-qCM-fid}, it can be concluded that 
\begin{equation}
Pr[F \geq \negl(\lambda)] \leq \frac{d+1}{D}
\end{equation}
where $d$ is the dimension of the subspace spanned by the learnt queries and $D$ is the dimension of the Hilbert space that the UqPUF is defined over it. Thus, $D=2^n$ where $n$ is the number of qubits in each input/output state. Since the adversary is a QPT adversary, the number of learnt queries and as a result the value of $d$ should be polynomial in $n$, i.e. $d=poly(n)$.

Also, according to Definition \ref{def:test}, we have,
\begin{equation}
Pr[1 \leftarrow \T | F < \negl(\lambda)] = Err(\kappa_1, \kappa_2)
\end{equation}
And,
\begin{equation}
Pr[1 \leftarrow \T | F \geq \negl(\lambda)]\leq F
\end{equation}
Considering the equality cases and due to the fact that $Pr[F < \negl(\lambda)] = 1 - Pr[F \geq \negl(\lambda)]$,
\begin{equation}
Pr[1 \leftarrow \T] = Err(\kappa_1, \kappa_2) (1-\frac{d+1}{D}) + \negl(\lambda) \frac{d+1}{D}
\end{equation}
Recall that $Err(\kappa_1, \kappa_2)=\negl(\kappa_1,\kappa_2)$, $d=poly(n)$ and $D=2^n$ and hence $\frac{d+1}{D}=\negl(n)$ and the probability that the test algorithm outputs 1 is computed as
\begin{equation}
\begin{split}
Pr[1 \leftarrow \T] & = \negl(\kappa_1,\kappa_2) (1-\negl(n)) + \negl(\lambda) \negl(n) \\ &= \negl(\kappa_1,\kappa_2) + \negl(\lambda) \negl(n) 
\end{split}    
\end{equation}
Let $\lambda=f(\kappa_1,\kappa_2,n)$, therefore we have 
\begin{equation}
Pr[1\leftarrow \Gnn{\UqPUF}{\qSel}(\lambda, \A)] = Pr[1 \leftarrow \T] = \negl(\lambda)
\end{equation}
and the proof is complete.
\end{proof}

\section{Discussion and Future works}
In this section, we briefly discuss the relationship between our proposal and other types of PUFs, as well as the open questions and direction for future works. 

Here, we briefly discuss how requirements and security properties defined for cPUFs and QR-PUFs \cite{vskoric2010quantum,vskoric2012quantum} in the literature differ from or relate to what we have defined as qPUF in this paper while leaving a concrete comparison between various PUF instances for future studies.

Most of the available PUF structures use digital encoding as their inputs and outputs so that they can easily be integrated with other functionalities in Integrated Circuits (ICs). This means their input-output pairs are bit-strings. As we can encode the bit strings in computational bases of the Hilbert space, the cPUFs can be considered as special types of Unitary qPUFs (UqPUFs) that can only operate on the computational bases, i.e. map the computational bases in their input domain to other computational bases in their output range. So, our result stating that no UqPUF provides quantum existential unforgeability also shows no cPUF, assuming that they can be queried by quantum states, can provide this security notion for $\mu\neq 1$.

According to \cite{armknecht2016towards}, if a cPUF provides the min-entropy requirement (which imposes that the cPUF responses are linearly independent) then it can provide existential unforgeability \cite{armknecht2016towards} against classical adversaries with no quantum access to the cPUF. However, this requirement cannot be satisfied with most of the common cPUF structures as shown in \cite{ganji2016strong,ruhrmair2014puf,ruhrmair2010modeling,khalafalla2019pufs}. Instead of the min-entropy requirement that seems hard or impossible to be achieved, we only consider the basic assumption on PUFs that let the behaviour of PUF be unknown to anyone \cite{ruhrmair2014pufs}; and instead of existential unforgeability property which seems impossible to be achieved for both cPUFs and qPUFs, we consider the selective unforgeability property which is a weaker, yet more relevant, notion than the existential one.
 
To the best of our knowledge, there is no study on quantum security of cPUFs in the literature. We emphasise given the speedy progress in quantum technology the investigation of the security of cPUFs against quantum adversaries is crucial. The security of silicon cPUFs and the other types of cPUFs that cannot be queried by quantum states can be explored in the \emph{post-quantum (or standard) security model} where the quantum adversary has only classical interaction with the primitive while he has been equipped with a powerful quantum computer. However, for the other types of cPUF structures like optical PUFs that can naturally be queried with quantum states, the security of cPUFs need to be analysed in the quantum security model where the adversary in addition to having a quantum computer can have quantum access to the cPUF oracle. Note that quantum selective unforgeability of this type of cPUF structures can be investigated in the aforementioned model. We leave exploring these open questions for future studies. 

Another main category of PUFs that can be represented via unitary transformations, is Quantum Read-out PUFs (QR-PUFs). The original definition of QR-PUFs considered cPUFs with quantumly-encoded challenge-response pairs. \cite{vskoric2010quantum,vskoric2012quantum}. The security of QR-PUF-based identification protocols has been investigated in specific security models, such as prepare-and-resend adversaries in \cite{vskoric2010quantum,vskoric2012quantum,nikolopoulos2017continuous,goorden2014quantum,vskoric2013security,nikolopoulos2018continuous,fladung2019intercept} where either the full unitary transformation or equivalently the classical description of QR-PUF responses for any known challenge, is assumed to be public knowledge. The security of such PUF-based protocols relies on the bounds on the ability of an adversary to estimate an unknown quantum challenge sent by the verifier. 

Although our current framework as it is, will not be directly applicable to all sorts of protocols and scenarios in which QR-PUFs are defined and used due to specific sets of assumptions and adversarial models considered in these scenarios, we believe that an extended variant of QR-PUFs can be studied as a stand-alone primitive in our proposed framework. We call this extended class, Public-Database PUFs (or PDB-PUFs) which include any PUF that can be queried with quantum (or quantumly encoded) challenges, produce quantum states as responses and are modelled by a publicly known unitary transformation or a public database equivalently. Our framework provides security notions against general and quantum adversaries in the standard game-based model. Hence we can also investigate the security of PDB-PUFs, by relaxing the unknownness condition for this class.


    
It can easily be shown that in the case of PDB-PUFs the adversary has more knowledge compared to qPUFs, so, these PUFs cannot provide quantum existential unforgeability, either. But more interestingly, using our toolkit of the quantum emulation attack, one can also show that, provided that the classical description of the unitary or the responses to be known, PDB-PUFs do not even provide quantum selective unforgeability against QPT adversaries, even if the adversary is unable to efficiently estimate the challenge quantum state. To see why let us assume the challenger to be also an efficient quantum party. Hence a QPT adversary having knowledge over the database can efficiently span a subspace, including the challenge state, hence the approximate response can be produced with high fidelity using the universal quantum emulator as has been discussed in Section~\ref{sec:qe}. We should mention that the feasibility of other quantum attacks with current technologies has been discussed in \cite{vskoric2010quantum,vskoric2012quantum,nikolopoulos2017continuous,goorden2014quantum,vskoric2013security,nikolopoulos2018continuous,fladung2019intercept}. However, it remains an interesting open question when the quantum emulator attack presented in this paper can also be demonstrated on emerging quantum devices.

Another interesting direction for future work is whether the assumptions of QR-PUFs can be matched to the current framework to be able to study their provable security against stronger quantum adversaries. It seems that if one can assume the classical description of $\U_{QR}$ to be private and the challenge state can be chosen uniformly at random from the whole Hilbert space, the QR-PUFs like qPUFs can provide the quantum selective unforgeability. Although this remains an interesting open problem.



An important complementary question that we left open is the design of concrete qPUF construction based on the formal framework proposed in this work. Introducing a proper construction for quantum PUF would be much more complicated than their classical counterparts as one needs to deal with many complications of the quantum world such as decoherence. Although similar to the case of classical PUF, optical devices still remain good candidates for qPUFs and worth a formal study that would be able to show whether they satisfy all the requirements and properties of a secure qPUF. Moreover, some randomised circuit-based construction such as t-design can also be a suitable candidate for qPUF as we have recently explored \cite{kumar2021efficient}. Another challenge in the way of industrialising of the qPUFs is the need for quantum memory for some of the qPUF-based protocols. It is an interesting question that how much this resource can be reduced or even removed in different protocols. Finally, the current definition allows the study of unitary qPUFs while as also mentioned in the paper, by relaxing some of the requirements the framework could also allow for non-unitary qPUF which is another natural open question for the future studies.

\bibliographystyle{plain}
\bibliography{qPUF-qj-final}

\onecolumn\newpage
\appendix

\section{Background on Classical Physical Unclonable Functions} \label{app:classic-puf}
In this section, we briefly present the formal definition of Physical Unclonable Functions (PUFs) as found in the classical literature \cite{armknecht2016towards,ruhrmair2014pufs,brzuska2011physically}. Let a $\mathcal{D}$-family be a set of physical devices generated through the same manufacturing process. Due to unavoidable variations during manufacturing, each device has some unique features that are not easily clonable. A Physical Unclonable Function (PUF) is an operation making these features observable and measurable by the holder of the device. 

As in \cite{armknecht2016towards,brzuska2011physically}, we formalize the manufacturing process of a PUF by defining the $\mathrm{Gen}$ algorithm that takes the security parameter $\lambda$ as input and generates a PUF with an identifier $\mathbf{id}$. Note that each time the $\mathrm{Gen}$ algorithm is run, a new PUF with new $\mathbf{id}$ is built. So, we have:
\begin{equation}
    \mathrm{PUF}_{\mathbf{id}} \leftarrow \mathrm{Gen}(\lambda).
\end{equation}
Also, we define the $\mathrm{Eval}$ algorithm that takes a challenge $x$ and $\mathrm{PUF}_{\mathbf{id}}$ as inputs and generates the corresponding response $y_\mathbf{id}$ as output:
\begin{equation}
    y_{\mathbf{id}} \leftarrow \mathrm{Eval}(\mathrm{PUF}_{\mathbf{id}},x).
\end{equation}
Due to variations in the environmental conditions, for any given $\mathrm{PUF}_{\mathbf{id}}$, the $\mathrm{Eval}$ algorithm may generate a different response to the same challenge $x$. It is required that this noise be bounded as follows; if $\mathrm{Eval}(\mathrm{PUF}_{\mathbf{id}},x)$ is run several times, the maximum distance between the corresponding responses should at most be $\delta_r$. This requirement is termed the \emph{robustness requirement}.

Consider a family of $\mathrm{PUF}$ generated by the same $\mathrm{Gen}$ algorithm, and assume the algorithm $\mathrm{Eval}$ is run on all of them with a single challenge $x$. To be able to distinguish each $\mathrm{PUF}_\mathbf{id}$, it is required that the minimum distance between the corresponding responses be at least $\delta_u$. This requirement is termed the \emph{uniqueness requirement}.

The other requirement considered in \cite{armknecht2016towards} is \emph{collision-resistance}. This imposes that whenever the $\mathrm{Eval}$ algorithm is run on $\mathrm{PUF}_\mathbf{id}$ with different challenges, the minimum distance between the different responses must be at least $\delta_c$.
The parameters $\delta_r$, $\delta_u$, $\delta_c$ are determined by the security parameter $\lambda$. Robustness, uniqueness and collision-resistance are crucial for correctness of cryptographic schemes built on top of PUFs. The conditions $\delta_r \le \delta_u$ and $\delta_r \le \delta_c$ must be satisfied to allow for distinguishing different challenges and PUFs~\cite{armknecht2016towards}.

According to the above, a $(\lambda,\delta_r,\delta_u,\delta_c)$-PUF is defined as a pair of algorithms: $\mathrm{Gen}$ and $\mathrm{Eval}$ that provides the robustness, uniqueness and collision-resistance requirements. We call a $(\lambda,\delta_r,\delta_u,\delta_c)$-PUF a Classical PUF (cPUF), if the $\mathrm{Eval}$ algorithm runs on classical information such as bit strings.
Any classical function $f:\{0,1\}^n \rightarrow \{0,1\}^m$, including a cPUF's $\mathrm{Eval}$, can be modelled as a unitary transformation as follows
\begin{equation}
    \forall x\in \{0,1\}^n, \forall y\in \{0,1\}^m: U_f\ket{x,y}:=\ket{x,f(x)\oplus y}
\end{equation}
and thus a quantum adversary can query $U_f$ on any desired quantum states such as the superposition of all the classical inputs.

\section{Proof of Theorem~\ref{th:qe-fins}: Quantum Emulation Output}\label{app:qe-final-state-proof}
Here we give the full proof of Theorem~\ref{th:qe-fins} as follows.

\begin{proof}
We prove the theorem by induction. For the first block ($K=1$), according to equation~(\ref{eq:qe-recur}) and letting $\ket{\chi_0} = \ket{\psi}$ we have:
\begin{equation}
    \ket{\chi_1}= \frac{1}{2}[(I - R(\phi_r))\ket{\psi}\ket{0} + R(\phi_i)(\mathbb{I} + R(\phi_r))\ket{\psi}\ket{1}]    
\end{equation}
where the term $I - R(\phi_r) = 2\ket{\phi_r}\bra{\phi_r}$ projects the previous state to $\ket{\phi_r}$ with the coefficient $\mbraket{\phi_r}{\psi}$ and the term $R(\phi_i)(I + R(\phi_r))$ is equal to:
\begin{equation}\label{eq:qe2term}
    R(\phi_i)(I + R(\phi_r)) = 2[I - \ket{\phi_r}\bra{\phi_r} - 2\ket{\phi_i}\bra{\phi_i} + 2\bra{\phi_i}\phi_r\rangle \ket{\phi_i}\bra{\phi_r}].
\end{equation}
Thus, the final relation between all the parameters in the first block is as follows.
\begin{equation}
    \ket{\chi_1} = \mbraket{\phi_r}{\psi} \ket{\phi_r}\ket{0} + \ket{\psi}\ket{1} - \mbraket{\phi_r}{\psi} \ket{\phi_r}\ket{1} -2\mbraket{\phi_1}{\psi} \ket{\phi_1}\ket{1}
    +2\mbraket{\phi_r}{\psi}\mbraket{\phi_r}{\phi_1} \ket{\phi_1}\ket{1}
\end{equation}

As can be seen, it satisfies the form of equation~(\ref{eq:qe-fins}) where the first sum is zero and in the second sum $g_{10}=-1, g_{11}=+1$, $l'_{10} = l'_{11} = 1$, $x'_{10} = z'_{10} = 0, y'_{10} = 1$, $x'_{11} = z'_{11} = 1$ and $y'_{11} = 0$.

Now we write $\ket{\chi_{K}}$ according to equation (\ref{eq:qe-recur}), assume $\ket{\chi_{K-1}}$ is written in form of equation~(\ref{eq:qe-fins}) and show $\ket{\chi_{K}}$ also satisfies this equation.
\begin{equation}
\begin{split}
    \ket{\chi_K} &= \mbraket{\phi_r}{\chi_{K-1}} \ket{\phi_r}\ket{0} + \ket{\chi_{K-1}}\ket{1} - \mbraket{\phi_r}{\chi_{K-1}} \ket{\phi_r}\ket{1} -2\mbraket{\phi_K}{\chi_{K-1}} \ket{\phi_K}\ket{1} \\
    & +2\mbraket{\phi_r}{\chi_{K-1}}\mbraket{\phi_r}{\phi_K} \ket{\phi_K}\ket{1}
\end{split}
\end{equation}
By substituting $\ket{\chi_{K-1}}$ with its equivalent based on equation~(\ref{eq:qe-fins}), we calculate each term in the above formula. Note that the coefficient in the third term is the same as the first one with a minus sign, and the ancillary state for the first term is $\ket{0}$ while for the third term is $\ket{1}$. Thus, we only show the details of the calculation for the first term:
\begin{equation}
\begin{split}
    \mbraket{\phi_r}&{\chi_{K-1}} \ket{\phi_r}\ket{0} = \\ &
    \mbraket{\phi_r}{\psi}\ket{\phi_r}\ket{0}^{\otimes K} + \mbraket{\phi_r}{\psi}\ket{\phi_r}\ket{1}^{\otimes K-1}\ket{0} - \mbraket{\phi_r}{\psi}\ket{\phi_r}\ket{1}^{\otimes K-1}\ket{0} + \\
    & + \sum^{K-1}_{i=1}\sum^{i}_{j=0} [f_{ij} 2^{l_{ij}} |\mbraket{\phi_r}{\psi}|^{x_{ij}} |\mbraket{\phi_i}{\psi}|^{y_{ij}} |\mbraket{\phi_r}{\phi_i}|^{z_{ij}}]\ket{\phi_r}\ket{q_{anc}(i,j)}\ket{0} \\
    & + \sum^{K-1}_{i=1}\sum^{i}_{j=0} [g_{ij} 2^{l'_{ij}} |\mbraket{\phi_r}{\psi}|^{x'_{ij}} |\mbraket{\phi_i}{\psi}|^{y'_{ij}} |\mbraket{\phi_r}{\phi_i}|^{z'_{ij}+1}]\ket{\phi_i}\ket{q'_{anc}(i,j)}\ket{0}.
\end{split}
\end{equation}
The second term is calculated as follows:
\begin{equation}
\begin{split}
    \ket{\chi_{K-1}}\ket{1} &= \mbraket{\phi_r}{\psi}\ket{0}^{\otimes K-1}\ket{1} + \ket{\psi}\ket{1}^{\otimes K} -\mbraket{\phi_r}{\psi}\ket{\phi_r}\ket{1}^{\otimes K} + \\
    & + \sum^{K-1}_{i=1}\sum^{i}_{j=0} [f_{ij} 2^{l_{ij}} |\mbraket{\phi_r}{\psi}|^{x_{ij}} |\mbraket{\phi_i}{\psi}|^{y_{ij}} |\mbraket{\phi_r}{\phi_i}|^{z_{ij}}]\ket{\phi_r}\ket{q_{anc}(i,j)}\ket{1} \\
    & + \sum^{K-1}_{i=1}\sum^{i}_{j=0} [g_{ij} 2^{l'_{ij}} |\mbraket{\phi_r}{\psi}|^{x'_{ij}} |\mbraket{\phi_i}{\psi}|^{y'_{ij}} |\mbraket{\phi_r}{\phi_i}|^{z'_{ij}}]\ket{\phi_i}\ket{q'_{anc}(i,j)}\ket{1}.
\end{split}
\end{equation}
The forth term $-2\mbraket{\phi_K}{\chi_{K-1}}\ket{\phi_K}\ket{1}$ has the coefficient $-2\mbraket{\phi_K}{\chi_{K-1}}$, which produces the same sigma terms while only  $l'_{i,j}, x'_{i,j}, y'_{i,j}$ and $z'_{i,j}$ are increased by one. The fifth term $2\mbraket{\phi_r}{\chi_{K-1}}\mbraket{\phi_r}{\phi_K}\ket{\phi_K}\ket{1}$ has the coefficient  $2\mbraket{\phi_r}{\chi_{K-1}}\mbraket{\phi_r}{\phi_K}$ and similarly produces the same sigma terms where $l_{i,j}$, $x_{i,j}$, $y_{i,j}$ and $z_{i,j}$ are increased by one (Note that the $\mbraket{\phi_r}{\phi_K}$ is itself one of the terms of the sigma). Finally by adding all these terms the equation (\ref{eq:qe-fins}) is obtained and the proof is complete.
\end{proof}

\section{Lemma for the Proof of Theorem~\ref{theorem:non-unitary}}\label{ap:channel-output-ditance}
We establish the following lemma that we have used in the proof of theorem~\ref{theorem:non-unitary}.
\begin{lemma}\label{lemma:epsilon-cpt-trace-distance}
Let $\E$ be a CPT map of the for $\E(\rho) = (1-\epsilon)U \rho U^{\dagger} + \epsilon \Tilde{\E}(\rho)$ where $U$ is a unitary and $\Tilde{\E}$ is a strictly contractive CPT map. Let $\rho$ and $\sigma$ be two arbitrary density matrices with trace distance $D = \mathcal{D}_{tr}(\rho, \sigma)$. Then the following inequality holds:
\begin{equation}
    \mathcal{D}_{tr}(\rho, \sigma) - \mathcal{D}_{tr}(\E(\rho), \E(\sigma)) \leq \epsilon D
\end{equation}
\end{lemma}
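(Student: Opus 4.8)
\textbf{Proof plan for Lemma~\ref{lemma:epsilon-cpt-trace-distance}.}

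The plan is to exploit the linearity of $\E$ together with the convexity (subadditivity under convex combinations) of the trace distance. First I would write the action of $\E$ on the difference $\rho - \sigma$: since $\E$ is linear in its argument,
\begin{equation}
\E(\rho) - \E(\sigma) = (1-\epsilon)\bigl(U\rho U^{\dagger} - U\sigma U^{\dagger}\bigr) + \epsilon\bigl(\Tilde{\E}(\rho) - \Tilde{\E}(\sigma)\bigr).
\end{equation}
Taking the trace norm of both sides and applying the triangle inequality gives
\begin{equation}
\mathcal{D}_{tr}(\E(\rho),\E(\sigma)) \leq (1-\epsilon)\,\mathcal{D}_{tr}(U\rho U^{\dagger}, U\sigma U^{\dagger}) + \epsilon\,\mathcal{D}_{tr}(\Tilde{\E}(\rho),\Tilde{\E}(\sigma)).
\end{equation}
Here I would use the fact that unitary conjugation preserves the trace distance, so the first term is exactly $(1-\epsilon) D$, and the fact that $\Tilde{\E}$ is a CPT map, hence contractive under the trace distance (the same contractivity property of trace-preserving operations invoked in the proof of Theorem~\ref{theorem:non-unitary}), so $\mathcal{D}_{tr}(\Tilde{\E}(\rho),\Tilde{\E}(\sigma)) \leq D$.

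Combining these two bounds yields $\mathcal{D}_{tr}(\E(\rho),\E(\sigma)) \leq (1-\epsilon)D + \epsilon D = D$, which on its own only recovers contractivity and is not yet the claimed inequality. The point is that I want a \emph{lower} bound on $\mathcal{D}_{tr}(\E(\rho),\E(\sigma))$, so I should instead bound the quantity $D - \mathcal{D}_{tr}(\E(\rho),\E(\sigma))$ directly. Using the reverse triangle inequality on the displayed decomposition of $\E(\rho)-\E(\sigma)$:
\begin{equation}
\mathcal{D}_{tr}(\E(\rho),\E(\sigma)) \geq (1-\epsilon)\,\mathcal{D}_{tr}(U\rho U^{\dagger}, U\sigma U^{\dagger}) - \epsilon\,\mathcal{D}_{tr}(\Tilde{\E}(\rho),\Tilde{\E}(\sigma)) = (1-\epsilon)D - \epsilon\,\mathcal{D}_{tr}(\Tilde{\E}(\rho),\Tilde{\E}(\sigma)).
\end{equation}
Then, again using contractivity $\mathcal{D}_{tr}(\Tilde{\E}(\rho),\Tilde{\E}(\sigma)) \leq D$, I get $\mathcal{D}_{tr}(\E(\rho),\E(\sigma)) \geq (1-\epsilon)D - \epsilon D = (1-2\epsilon)D$, giving $D - \mathcal{D}_{tr}(\E(\rho),\E(\sigma)) \leq 2\epsilon D$. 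This is off by a factor of $2$ from the stated bound, so the naive reverse-triangle-inequality argument is too lossy and the genuine content of the lemma must come from a sharper estimate.

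The hard part will be closing this factor-of-two gap. I expect the sharper route is to observe that the state $\E(\rho)$ and $\E(\sigma)$ share the \emph{same} mixing weight $\epsilon$ on the respective $\Tilde\E$-components, and to use a more careful operator-level argument: write the optimal measurement (or the Hölder-dual observable $M$ with $-\mathbb{I}\leq M\leq\mathbb{I}$ achieving the trace distance for the pair $U\rho U^\dagger, U\sigma U^\dagger$) and evaluate $\operatorname{Tr}[M(\E(\rho)-\E(\sigma))]$, bounding the cross term $\epsilon\operatorname{Tr}[M(\Tilde\E(\rho)-\Tilde\E(\sigma))]$ by $\epsilon\,\mathcal{D}_{tr}(\Tilde\E(\rho),\Tilde\E(\sigma))\leq \epsilon D$ while keeping the main term at its full value $(1-\epsilon)\operatorname{Tr}[M(U\rho U^\dagger - U\sigma U^\dagger)] = (1-\epsilon)D$. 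Since this $M$ lower-bounds the true trace distance, $\mathcal{D}_{tr}(\E(\rho),\E(\sigma)) \geq (1-\epsilon)D - \epsilon D$ again, which does not help; the actual improvement must instead exploit that one may choose the measurement optimally for $\E(\rho)$ vs $\E(\sigma)$ and note the $\Tilde\E$ contribution cannot fully oppose the unitary contribution because both are genuine density operators, effectively a one-sided rather than two-sided perturbation — I would formalize this by a monotonicity/convexity argument in $\epsilon$, showing $\epsilon \mapsto \mathcal{D}_{tr}(\E(\rho),\E(\sigma))$ is convex and comparing endpoint slopes, which pins the loss at exactly $\epsilon D$ rather than $2\epsilon D$.
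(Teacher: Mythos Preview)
Your reverse-triangle-inequality bound $D - \mathcal{D}_{tr}(\E(\rho),\E(\sigma)) \leq 2\epsilon D$ is correct, and you should stop there: none of the sharper routes you sketch will close the factor of two. The optimal-measurement argument is literally equivalent to the reverse triangle inequality (choosing $M$ optimal for the unitary part is just lower-bounding the trace norm by one linear functional), and convexity of $\epsilon\mapsto\tfrac12\|(1-\epsilon)A+\epsilon C\|_1$ gives only an \emph{upper} bound on the trace distance via the chord, never the lower bound you need. More to the point, the constant $2$ is genuinely tight under the stated hypotheses: take $U=I$, $\rho=\ket{0}\bra{0}$, $\sigma=\ket{1}\bra{1}$, and $\Tilde\E(\cdot) = (1-\eta)X(\cdot)X + \eta I/2$ for small $\eta>0$ (strictly contractive); then $\E(\rho)-\E(\sigma)=[1-\epsilon(2-\eta)](\rho-\sigma)$ and $D - \mathcal{D}_{tr}(\E(\rho),\E(\sigma)) = \epsilon(2-\eta)D>\epsilon D$. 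So the ``hard part'' you are trying to close is not closable.

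The paper's proof takes a completely different route from your norm inequalities: it asserts that among all contractive $\Tilde\E$ the gap $D-\mathcal{D}_{tr}(\E(\rho),\E(\sigma))$ is maximised when $\Tilde\E$ is the fully depolarising channel $\rho\mapsto I/d$, and for that extremal choice the $\epsilon$-contributions cancel identically in $\E(\rho)-\E(\sigma)$, giving $\mathcal{D}_{tr}(\E(\rho),\E(\sigma))=(1-\epsilon)D$ exactly and hence the bound $\epsilon D$. That extremal claim is stated heuristically and, as the example above shows, does not hold for arbitrary (strictly) contractive $\Tilde\E$. The practical upshot is that the lemma is only invoked in Theorem~\ref{theorem:non-unitary} to conclude that the fidelity gap is $O(\epsilon)$ and hence negligible when $\epsilon=\negl(\lambda)$; your $2\epsilon D$ bound serves that purpose equally well, so there is no need to chase the sharper constant.
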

\begin{proof}
We note that the first part of the channel $\E$, which outputs density matrix $U \rho U^{\dagger}$ with probability $(1-\epsilon)^2$, is a unitary and preserves the distance. As a result, for a fixed value of $\epsilon$ and fixed arbitrary states $\rho$ and $\sigma$, the difference between the trace distances of the output of $\E$ and the input states increases as $\Tilde{\E}$ becomes more contractive. As the maximum contractivity of $\Tilde{\E}$ occurs when $\Tilde{\E} = \frac{I}{d}$, then the maximum difference between the output and input trace distances is satisfies for this instance of the channel. Let $\E'(\rho) = (1-\epsilon)U \rho U^{\dagger} + \epsilon \frac{I}{d}$. Then for a fixed $\epsilon$ we will have:
\begin{equation}
    \mathcal{D}_{tr}(\rho, \sigma) - \mathcal{D}_{tr}(\E(\rho), \E(\sigma)) \leq \mathcal{D}_{tr}(\rho, \sigma) - \mathcal{D}_{tr}(\E'(\rho), \E'(\sigma))
\end{equation}

Now we calculate $\mathcal{D}_{tr}(\E'(\rho), \E'(\sigma))$ using the definition of the trace distance which is $\mathcal{D}_{tr}(\rho, \sigma) = \frac{1}{2}tr(|\rho - \sigma|)$. And $|A| = \sqrt{A^{\dagger}A}$ for a positive semidefinite matrix $A$. We calculate the trace distance as:
\begin{equation}
\begin{split}
        \mathcal{D}_{tr}(\E'(\rho), \E'(\sigma)) &= \frac{1}{2}tr[|\E'(\rho), \E'(\sigma)|] =  \frac{1}{2}tr[|(1-\epsilon)U \rho U^{\dagger} + \epsilon \frac{I}{d} - (1-\epsilon)U \sigma U^{\dagger} - \epsilon \frac{I}{d}|] \\
        & = (1-\epsilon)(\frac{1}{2}tr[|U \rho U^{\dagger} - U \sigma U^{\dagger}|]) = (1-\epsilon)\mathcal{D}_{tr}(U \rho U^{\dagger},U \sigma U^{\dagger})\\
        & = (1-\epsilon)\mathcal{D}_{tr}(\rho, \sigma)\\
        & = (1-\epsilon)D
\end{split}
\end{equation}
Finally, we can relate the desired trace distance with the above value as:
\begin{equation}
       \mathcal{D}_{tr}(\rho, \sigma) - \mathcal{D}_{tr}(\E(\rho), \E(\sigma)) \leq D - (1-\epsilon)D = \epsilon D
\end{equation}
And the lemma has been proved.
\end{proof}
\section{Full Proof of Theorem~\ref{th:sel-qCM-fid}}\label{ap:proof-sel-fid}
\begin{proof}
Let $\A$ be a QPT adversary playing the game $\Gnn{\UqPUF}{\qSel}(\lambda, \A)$ where $\UqPUF$ is defined over $\HilD$. Let $\Sin$ and $\Sout$ be the input and output database of the adversary after the learning phase both with size $k_1$, respectively. Also, Let $\Hild$ be the $d$-dimensional Hilbert space spanned by elements of $\Sin$ where $d \leq k_1$ and $\Hildout$ be the Hilbert space spanned by elements of $\Sout$ with the same dimension. $\A$ receives an unknown quantum state $\ket{\psi}$ as a challenge in the $\qSel$ challenge phase and tries to output a state $\ket{\omega}$ as close as possible to $\kpo$. We are interested in calculating the average probability that the fidelity of $\A$'s output state $\ket{\omega}$ and $\kpo$ be larger or equal to $\delta$. We calculate this probability over all the possible states chosen uniformly at random from $\HilD$.
\begin{equation}
Pr[1\leftarrow\Gnn{\UqPUF}{\qSel}(\lambda, \A)] = \underset{\ket{\psi}\in\HilD}{Pr}[F(\ket\omega,\kpo)\geq\delta]
\end{equation}
We calculate this probability over all the possible states chosen uniformly at random from $\HilD$.
We will show, for any $\delta \neq 0$, the success probability of $\A$ is negligible in $\lambda$. 

According to the game definition, as the adversary selects states of the learning phase, the classical description of these states are known for him while the corresponding responses are unknown quantum states. Let $\A'$ be the adversary who also receives the classical description of the outputs, or the complete set of bases of $\Hild$ and $\Hildout$. So, he will have a complete description of the map in the subspace; and as a result $\A'$ has a greater success probability than $\A$. \begin{equation}
Pr[1\leftarrow\Gnn{\UqPUF}{\qSel}(\lambda, \A)]\leq Pr[1\leftarrow\Gnn{\UqPUF}{\qSel}(\lambda, \A')]
\end{equation}
Therefore from now on throughout the proof, we calculate the success probability of $\A'$ who has full knowledge of the subspace.

Note that the adversary cannot enhance his knowledge of the subspace by entangling its local system to the challenges of the learning phase since the reduced density matrix of the challenge/response entangled state lies in the same subspace $\Hild$ and $\Hildout$. Hereby upper-bounding the success probability of $\A$ with the success probability of $\A'$ who has the full knowledge of the subspace we have also included the entangled queries. Thus without loss of generality and to avoid complicated notations, we consider the adversary's estimated state as a pure state $\ket{\omega}$.

Now, we partition the set of all the challenges to two parts: the challenges that are completely orthogonal to $\Hild$ subspace, and the rest of the challenges that have non-zero overlap with $\Hild$. We denote the subspace of all the states orthogonal to $\Hild$ as $\Hildperp$. We calculate the success probability of $\A'$ in terms of the following partial probabilities:
\begin{equation}
    \underset{\ket{\psi}\in\Hildperp}{Pr}[F \geq \delta] \text{  and } \underset{\ket{\psi}\not\in\Hildperp}{Pr}[F \geq \delta].
\end{equation}
Because the probability of $\ket{\psi}$ being in any particular subset is independent of the adversary's learnt queries, the success probability of $\A'$ can be written as:
\begin{equation}
        Pr[1\leftarrow\Gnn{\UqPUF}{\qSel}(\lambda, \A')] = \underset{\ket{\psi}\in\Hildperp}{Pr}[F \geq \delta]\times Pr[\ket{\psi} \in \Hildperp] + \underset{\ket{\psi}\not\in\Hildperp}{Pr}[F \geq \delta]\times Pr[\ket{\psi} \not\in \Hildperp]
\end{equation}
where $Pr[\ket{\psi} \in \Hildperp] = 1-Pr[\ket{\psi} \not\in \Hildperp]$ denotes the probability of $\ket{\psi}$ that is picked uniformly at random from $\HilD$ being projected into the subspace of $\Hildperp$. From lemma~\ref{lemma2}, we know that this probability for any subspace, is equal to the ratio of the dimensions. As $\Hildperp$ is a $D-d$ dimensional subspace, $Pr[\ket{\psi} \in \Hildperp] = \frac{D-d}{D}$ and respectively $Pr[\ket{\psi} \not\in \Hildperp] = \frac{d}{D}$. Also the probability is upper-bounded by the cases that the adversary can always win the game for $\ket{\psi} \not\in \Hildperp$. So, we have,
\begin{equation}
    Pr[1\leftarrow\Gnn{\UqPUF}{\qSel}(\lambda, \A')] \leq \underset{\ket{\psi}\in\Hildperp}{Pr}[F \geq \delta]\times(\frac{D-d}{D}) + \frac{d}{D}
\end{equation}
Finally, the only term that should be calculated is $\underset{\ket{\psi}\in\Hildperp}{Pr}[F \geq \delta]$.

Note that any $\ket{\psi}\in\HilD$ can be written in any set of full bases of $\HilD$ as $\ket{\psi} = \sum^D_{i=1}c_i\ket{e_i}$. For any $\ket{\psi}\in\Hildperp$, the set of $\{\ket{e_i}\}^D_{i=1}$ can be the a union of the bases of $\Hild$, i.e. $\{\ket{e^{in}_i}\}^d_{i=1}$ and the bases of $\Hildperp$, i.e. $\{\ket{e'_i}\}^{D}_{i=d+1}$. Note that any state in $\Hildperp$ is orthogonal to all the $\ket{e^{in}_i}$s. Thus, we write as follows
\begin{equation}
\ket{\psi} = \sum^d_{i=1}c^{in}_i\ket{e^{in}_i} + \sum^D_{i=d+1}c'_i\ket{e'_i}
\end{equation}

Recall that $\ket{\psi}\in\Hildperp$, so, $\mbraket{\psi}{e^{in}_i} = 0$ and as a result $c^{in}_i = 0$. So, 
\begin{equation}
\ket{\psi} = \sum^D_{i=d+1}c'_i\ket{e'_i}
\end{equation}
Similarly for the output state $\kpo = \sum^d_{i=1}c^{out}_i\ket{e^{out}_i} + \sum^D_{i=d+1}\alpha_i\ket{b_i}$, as the unitary preserves the inner product, $c^{out}_i = \mbraket{e^{out}_i}{\po} = \bra{e^{in}_i}U^{\dagger}U\ket{\psi} = \mbraket{e^{in}_i}{\psi} = 0$, and the correct output state can be written as
\begin{equation}
\kpo = \sum^D_{i=d+1}\alpha_i\ket{b_i} 
\end{equation}
where $\{\ket{b_i}\}^{D-d}_{i=1}$ are a set of bases for $\Hildperpo$. 
\noindent The output estimated by the adversary $\A'$ can be written as
\begin{equation}
    \ket{\omega} = \sum^d_{i=1}\beta_i\ket{e^{out}_i} + \sum^D_{i=d+1}\gamma_i\ket{q_i} 
\end{equation}
where the first term represents part of the output state, that has been produced by $\A$ from the his learnt output subspace and the second term denotes the part lies in $\Hildperpo$ with the set of bases $\{\ket{q_i}\}^{D-d}_{i=1}$. Based on the above argument, the fidelity of the first part is always zero as $\mbraket{b_i}{e^{out}_i} = 0$. 

Note that the normalization condition implies $\sum^d_{i=1}|\beta_i|^2 + \sum^D_{i=d+1}|\gamma_i|^2 = 1$. Thus for any state $\ket{\omega}$ that has a non-zero overlap with the learnt outputs, the fidelity with the correct state decreases. To make the $\A'$'s strategies optimal we assume  $\sum^{D-d}_{i=1}\gamma_i\ket{q_i} \in \Hildperpo$ where the normalization condition is $\sum^{D-d}_{i=1}|\gamma_i|^2 = 1$. 

Since there are infinite choices for set of bases orthogonal to $\{\ket{e^{out}_i}\}^d_{i=1}$, there is no way to uniquely choose or obtain the rest of the bases to complete the set. Also, another input of the adversary is the state $\ket{\psi}$ which according to the game definition, is an unknown state from a uniform distribution. As a result, the choice of the $\ket{q_i}$ bases are also independent of $\ket{e'_i}$ or $\ket{b_i}$. Thus knowing a matching pair of $(\ket{q_i},\ket{b_i})$ increases the dimension of the known subspace by one that means the adversary has more information that it is assumed to have. 

So, for each new challenge, $\A'$ produces a state $\ket{\omega} = \sum^{D-d}_{i=1}\gamma_i\ket{q_i}$ with a totally independent choice of bases. Without loss of generality we can fix the bases $\ket{q_i}$ for different $\ket{\omega}$. To calculate the success probability of $\A'$, we calculate the fidelity averaging over all the possible choices of $\psi$. As the unitary transformation preserves the distance, it maps a uniform distribution of states to a uniform distribution. This leads to a uniform distribution of all the possible $\kpo$. As a result, the average probability over all possible $\ket\psi$ is equal to the average probability over all possible $\kpo$.
\begin{equation}
    \underset{\ket{\psi}\in\Hildperp}{Pr}[F \geq \delta] = \underset{\kpo\in\Hildperpo}{Pr}[F \geq \delta].
\end{equation}
Now, we show that the adversary $\A'$ also needs to output $\ket{\omega}$ according to the uniform distribution to win the game in the average case.  

Let $\A'$ output the states according to a probability distribution $\mathfrak{D}$ which is not uniform. Then, by repeating the experiment asymptotically many times, the correct response $\kpo$ covers the whole $\Hildperpo$ while $\ket{\omega}$ covers a subspace of $\Hildperpo$. This decreases the average success probability of $\A'$. So, the best strategy for $\A'$ is to generate the states $\ket{\omega}$ such that they span the whole $\Hildperpo$, i.e. generating them according to the uniform distribution. 

Based on the above argument, and the fact that all the $\ket{\omega}$s are produced independently, we show that the average fidelity over all the $\kpo$ is equivalent to average fidelity over all the $\ket{\omega}$.

There are different methods for calculating the average fidelity \cite{zyczkowski2005average}, but most commonly the average fidelity can be written as:
\begin{equation}
    \underset{\kpo\in\Hildperpo}{\int}|\mbraket{\omega}{\po_x}|^2d\mu_x 
\end{equation}
where $d\mu$ is a measure based on which the reference state has been produced and parameterized. According to our uniformity assumption, the $d\mu$ here is the Haar measure. Note that $\ket{\omega}$ can be different for any new challenge. Now we rewrite the above average with the new parameters as:
\begin{equation}
\begin{split}
    \underset{\kpo\in\Hildperpo}{\int} F(\ket\omega,\ket{\psi_x^{out}})d\mu_x  & =  \underset{\kpo\in\Hildperpo}{\int}|\mbraket{\omega}{\po_x}|^2d\mu_x \\ & = \underset{\kpo\in\Hildperpo}{\int}|\sum^{D-d}_{i=1}\overline{\gamma_i}\mbraket{q_i}{\po_x}|^2d\mu_x \\
    & = \underset{\kpo\in\Hildperpo}{\int}|\sum^{D-d}_{i=1}\overline{\gamma_{i_{x}}}\mbraket{q_i}{\po}|^2d\mu_x \\ & = \underset{\ket{\omega}\in\Hildperpo}{\int}|\mbraket{\omega_x}{\po}|^2d\mu_x \\
    & = \underset{\ket\omega\in\Hildperpo}{\int} F(\ket{\omega_x},\ket{\psi^{out}})d\mu_x
\end{split}
\end{equation}

The above equality holds since the fidelity is a symmetric function of two states and the measure of integral is the same for both cases. We use this equality for averaging all the possible outputs for one $\kpo$. Recall that we aim to calculate the probability of the average fidelity being greater than $\delta$. To this end, we first calculate a more general probability that is the probability of the average fidelity to be non-zero. As we have 
\begin{equation}
    \underset{\ket{\omega}\in\Hildperpo}{Pr}[F \neq 0] + \underset{\ket{\omega}\in\Hildperpo}{Pr}[F = 0] = 1,
\end{equation}    
we calculate the probability of the zero fidelity for simplicity. So,
\begin{equation}
\begin{split}
    \underset{\ket{\omega}\in\Hildperpo}{Pr}[F = 0] &= \underset{\ket{\omega}\in\Hildperpo}{Pr}[|\mbraket{\omega}{\po}|^2 = 0] \\ & = Pr[(\int|\sum^{D-d}_{i=1}\overline{\gamma_{i_{x}}}\mbraket{q_i}{\po}|^2d\mu_x) = 0] \\
    & = \underset{x}{Pr}[(\sum^{D-d}_{i,j=1}\overline{\gamma_{i_{x}}}\alpha_j\mbraket{q_{i_{x}}}{b_j})^2 = 0]
\end{split}
\end{equation}
Based on the Cauchy–Schwarz inequality we have the following inequality:
\begin{equation}
    [\sum^{D-d}_{i,j=1}\overline{\gamma_{i_{x}}}\alpha_j\mbraket{q_{i}}{b_j}]^2 \geq  \sum^{D-d}_{i,j=1}|\overline{\gamma_{i_{x}}}\alpha_j|^2|\mbraket{q_{i}}{b_j}|^2
\end{equation}
where, 
\begin{equation}
\begin{split}
    \sum^{D-d}_{i,j=1}|\overline{\gamma_{i_{x}}}\alpha_j|^2|\mbraket{q_{i}}{b_j}|^2 & =
     \sum^{D-d}_{i,j=1}|\overline{\gamma_{i_{x}}}\alpha_j|^2|\mbraket{q_{i}}{b_j}\mbraket{b_j}{q_{i}}| = \sum^{D-d}_{i,j=1}|\overline{\gamma_{i_{x}}}\alpha_j|^2|\bra{q_{i}}\Pi_j\ket{q_{i}}|
\end{split}
\end{equation}
So, we have, 
\begin{equation}
\underset{\ket{\omega}\in\Hildperpo}{Pr}[F = 0] \geq \underset{x}{Pr}[\sum^{D-d}_{i,j=1}|\overline{\gamma_{i_{x}}}\alpha_j|^2|\bra{q_{i}}\Pi_j\ket{q_{i}}|=0]
\end{equation}
The smaller term is the probability of $\ket\omega$ being projected into the orthogonal subspace of a space that only includes $\kpo$ averaging over all the projectors. We call again Lemma~\ref{lemma2}. As the target subspace includes only one vector of the Hilbert space, the dimension of the orthogonal subspace is always one dimension less. Recall that $d^{\perp}=D-d$, the dimension of the intended orthogonal subspace is equal to $D-d-1$. So,
\begin{equation}
\begin{split}
    & \underset{x}{Pr}[(\sum^{D-d}_{i,j=1}|\overline{\gamma_{i_{x}}}\alpha_j|^2|\bra{q_{i}}\Pi_j\ket{q_{i}}|) = 0] = \frac{D-d-1}{D-d} \Rightarrow \\
    & \underset{\ket{\omega}\in\Hildperpo}{Pr}[F = 0] \geq \frac{D-d-1}{D-d}
\end{split}
\end{equation}
And as a result, 
\begin{equation}
    \underset{\kpo\in\Hildperpo}{Pr}[|\mbraket{\omega}{\po}| \neq 0] \leq \frac{1}{D-d}
\end{equation}    
So, for any non-zero $\delta$ we have,
\begin{equation}
    \underset{\ket{\psi}\in\Hildperp}{Pr}[|\mbraket{\omega}{\po}| \geq \delta] \leq \frac{1}{D-d}
\end{equation}    
Thus, the success probability of $\A'$ is
\begin{equation}
    Pr[1\leftarrow\Gnn{\UqPUF}{\qSel}(\lambda, \A')] = \frac{1}{D-d}\times(\frac{D-d}{D}) + \frac{d}{D} = \frac{d+1}{D}
\end{equation}
And the success probability of $\A$ is bounded by $\frac{d+1}{D}$,
\begin{equation}
Pr[1\leftarrow\Gnn{\UqPUF}{\qSel}(\lambda, \A)] \leq \frac{d+1}{D}
\end{equation}
and the theorem has been proved.
\end{proof}

\end{document}